\tikzstyle{Black}=[fill=black, draw=black, shape=circle, inner sep=0, minimum size=6pt]
\tikzstyle{White}=[fill=white, draw=black, shape=circle, inner sep=0, minimum size=6pt]
\tikzstyle{Red}=[fill=red, draw=black, shape=circle]
\tikzstyle{Rel}=[fill=white, draw=black, shape=chamfered rectangle, minimum size=8, chamfered rectangle corners=north east, chamfered rectangle xsep=2, chamfered rectangle ysep=2, tikzit shape=rectangle]
\tikzstyle{CoRel}=[fill=white, draw=black, shape=chamfered rectangle, minimum size=8, chamfered rectangle corners=north west, chamfered rectangle xsep=2, chamfered rectangle ysep=2, tikzit shape=rectangle, tikzit draw=red]
\tikzstyle{RelBig}=[fill=white, draw=black, shape=chamfered rectangle, minimum size=24, chamfered rectangle corners=north east, chamfered rectangle xsep=4, chamfered rectangle ysep=4, tikzit shape=rectangle]
\tikzstyle{CoRelBig}=[fill=white, draw=black, shape=chamfered rectangle, minimum size=24, chamfered rectangle corners=north west, chamfered rectangle xsep=4, chamfered rectangle ysep=4, tikzit shape=rectangle, tikzit draw=red]
\tikzstyle{Map}=[fill=white, draw=black, shape=rounded rectangle, minimum width=23, minimum height=16, rounded rectangle west arc=none, tikzit shape=rectangle]
\tikzstyle{MapSmall}=[fill=white, draw=black, shape=rounded rectangle, minimum width=12, minimum height=8, rounded rectangle west arc=none, tikzit shape=rectangle]
\tikzstyle{CoMap}=[fill=white, draw=black, shape=rounded rectangle, minimum width=23, minimum height=16, rounded rectangle east arc=none, tikzit shape=rectangle, tikzit draw=red]
\tikzstyle{MapBig}=[fill=white, draw=black, shape=rounded rectangle, minimum width=24, minimum height=32, rounded rectangle west arc=none, tikzit shape=rectangle]
\tikzstyle{CoMapBig}=[fill=white, draw=black, shape=rounded rectangle, minimum width=24, minimum height=32, rounded rectangle east arc=none, tikzit shape=rectangle, tikzit draw=red]
\tikzstyle{RelGray}=[fill=gray, draw=black, shape=chamfered rectangle, minimum size=8, chamfered rectangle corners=north east, chamfered rectangle xsep=2, chamfered rectangle ysep=2, tikzit shape=rectangle, tikzit fill={rgb,255: red,191; green,191; blue,191}]
\tikzstyle{CoRelGray}=[fill=gray, draw=black, shape=chamfered rectangle, tikzit draw=red, tikzit fill={rgb,255: red,191; green,191; blue,191}, minimum size=8, chamfered rectangle corners=north west, chamfered rectangle xsep=2, chamfered rectangle ysep=2, tikzit shape=rectangle]
\tikzstyle{Empty}=[fill=white, draw=black, shape=rectangle, dashed, minimum size=16, tikzit draw=black]
\tikzstyle{subspace}=[fill=white, draw=black, shape=rectangle]
\tikzstyle{Inv}=[fill={blue!20}, draw=blue, shape=rounded rectangle, minimum width=23, minimum height=16, rounded rectangle west arc=none, tikzit shape=rectangle]
\tikzstyle{CoInv}=[fill={blue!20}, draw=blue, shape=rounded rectangle, minimum width=23, minimum height=16, rounded rectangle east arc=none, tikzit shape=rectangle, tikzit draw=red]
\tikzstyle{InvBig}=[fill={blue!20}, draw=blue, shape=rounded rectangle, minimum width=24, minimum height=32, rounded rectangle west arc=none, tikzit shape=rectangle]
\tikzstyle{CoInvBig}=[fill={blue!20}, draw=blue, shape=rounded rectangle, minimum width=24, minimum height=32, rounded rectangle east arc=none, tikzit shape=rectangle, tikzit draw=red]
\tikzstyle{new edge style 3}=[-, dashed]
\DeclareMathOperator{\im}{Im}
\def\by{{\times}}
\def\K{\mathbb{K}}
\tikzset{external/only named=true}
\title{Generalizing the Invertible Matrix Theorem with Linear Relations using Graphical Linear Algebra}
\date{\today}
\author{%
Iago Leal de Freitas%
\thanks{The authors would like to thank Gustavo Freire and David E. Bernal Neira for their help reviewing this manuscript.}%
\\ Purdue University \\ West Lafayette, IN 47907, USA \\
\and
Júlia Mota\thanksas{1}%
\thanks{This study was financed in part by the Coordenação de Aperfeiçoamento de Pessoal de Nível Superior - Brasil (CAPES) - Finance Code 001.}%
\\ Universidade Federal do Rio de Janeiro \\ Rio de Janeiro, RJ, Brazil
\and
João Paixão\thanksas{1}%
\\ Universidade Federal do Rio de Janeiro \\ Rio de Janeiro, RJ, Brazil
\and
Lucas Rufino\thanksas{1}\thanksas{2}\corresponding%
\\ Instituto Nacional de Matemática Pura e Aplicada \\ Rio de Janeiro, RJ, Brazil
}
\begin{document}

\maketitle

\begin{abstract}
Linear algebra's main concerns are sets of vectors, linear functions, subspaces, linear systems, matrices and concepts about those, such as whether the solution of linear system exists or is unique; a set of vectors is linearly independent or spans the whole space; a linear function has a right or a left inverse; a linear function is surjective or injective; and the kernel of a matrix is trivial or the its image is full.

The Invertible Matrix Theorem ties all these ideas and many others together.
Many modern linear algebra books use this theorem as a guiding principle to explain many connections in linear algebra. The main idea is to separately characterize whether the linear function is surjective or injective. The proof usually uses a matrix decomposition as the key step.
However, the invertible matrix theorem deals with a single linear function, a single set of vectors, a single subspace, and a single matrix.

In this work, we generalize part of the invertible matrix theorem to results about
a pair of linear functions, a pair of sets of vectors, a pair of subspaces, and a single linear relation.
The main idea is to separately characterize the linear relation's fundamental properties---whether it is surjective, injective, deterministic and total.
Our proof uses a decomposition of a linear relation as the key step.

Unfortunately, reasoning with linear relations in classical notation requires
applying many rules besides shuffling quantifiers and variables around,
which can obscure the symmetries in the results.
Therefore,  this work employs graphical linear algebra,
a two-dimensional diagrammatic syntax with the fundamental rules of linear relations built-in.
\end{abstract}

\begin{keywords}
Linear Relations, Matrix Decomposition, Graphical Linear Algebra, Invertible Matrix Theorem, String Diagrams
\end{keywords}

\tableofcontents
\newpage

\section{Introduction}
\label{sec:introduction}

In linear algebra, there are many ways of stating that a matrix is invertible. For example, a $n$-by-$n$ matrix $A$ is invertible when either its kernel is zero, its image is full, or it admits an inverse $AB = I = BA$. The central result that establishes all these notions of invertibility as equivalent is the Invertible Matrix Theorem~\cite{weisstein2014invertible}. Its only underlying hypothesis is that the matrix A must be square. As illustrated in Figure~\ref{fig:intro_invertible_matrix_theorem}, by dropping this hypothesis, the characterization breaks down into two: surjectivity and injectivity.

\begin{figure}[h]
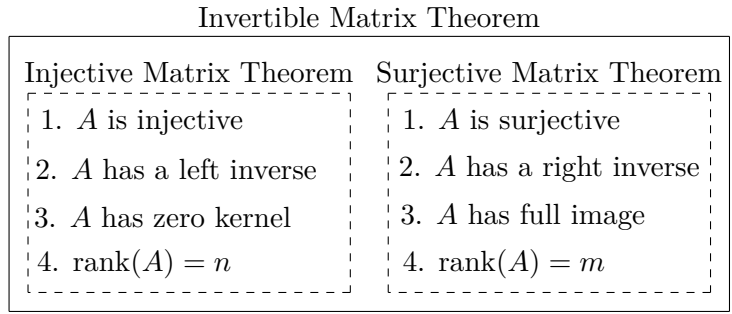

  \centering
  \tikzfig{introduction_figures/invertible_matrix_theorem}
  \caption{For a general $n \by m$ matrix $A$, surjectivity and injectivity are separate properties.
  However, when $A$ is square, the two characterizations merge into one.}
    \label{fig:intro_invertible_matrix_theorem}
\end{figure}

The Invertible Matrix Theorem may be viewed as a gluing theorem, connecting two characterizations into a single one. The key to this connection is knowing how the properties of surjectivity (SUR), injectivity (INJ), and the dimensional properties (i.e. $m \leq n$ and $m \geq n$) relate to each other. Let's call them the \emph{fundamental properties} of a matrix. A good way to illustrate their relation is through a graph, shown in Figure~\ref{fig:intro_poset_of_4_fundamental_properties}.

\begin{figure}[h]
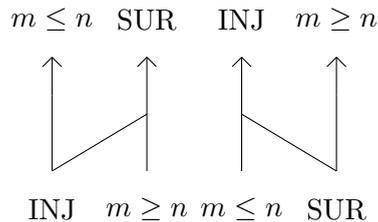

  \centering
  \tikzfig{poset_of_fundamental_properties/little_poset}
  \caption{Graph of the $4$ fundamental properties of a $n \by m$ matrix.
  To read this graph, start at a node $Y$ at the top and let $X_1, X_2,\ldots, X_k$ be all the nodes at the bottom reachable through the paths starting at $Y$.
  Then, $X_1,\ldots, X_k \implies Y$.
  For example, if we start at the node SUR at the top and follow the edges downwards, we reach the nodes INJ and $(m \leq n)$ at the bottom.
  }
  \label{fig:intro_poset_of_4_fundamental_properties}
\end{figure}
The graph shows that an injective matrix with more rows than columns must also be surjective. Consequently, for square matrices injectivity and surjectivity are equivalent notions, thus characterizing a single concept: invertibility. Under this perspective, the Invertible Matrix Theorem is a consequence of two results: \emph{a characterization of the fundamental properties} and \emph{a graph showing how they relate}.

Everything so far concerns theorems about a single matrix.
Linear algebra, however, also has various theorems about a more general class of objects, namely the solutions of the linear system $\{(x,y) \mid Ax = By\}$
(Notice that by setting $B = I$, this becomes the graph for the function $A$. Thus, this generalizes the ``single matrix'' case).
In this context, there are natural generalizations of the aforementioned fundamental properties SUR, INJ, $m \leq n$ and $m \geq n$, and also two new ones called determinism (DET) and totality (TOT).
The main goal of this paper is to generalize the Invertible Matrix Theorem to this setting.
More concretely, Theorems~\ref{thm:cospan_dict} and~\ref{thm:poset_of_fundamental_properties} provide characterizations for each fundamental property and a gluing theorem showing how they relate.
This yields a more general version of the graph in Figure~\ref{fig:intro_poset_of_4_fundamental_properties}, shown in Figure~\ref{fig:intro_poset_of_6_fundamental_properties}.

\begin{figure}[h]
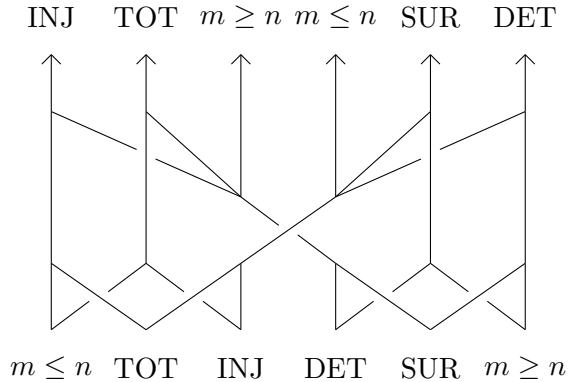

  \centering
  \tikzfig{poset_of_fundamental_properties/poset}
  \caption{Graph of the $6$ fundamental properties of a pair of $k \by m$ and $k \by n$ matrices.
  To read this graph, use the same rules as Figure~\ref{fig:intro_poset_of_4_fundamental_properties}.}
  \label{fig:intro_poset_of_6_fundamental_properties}
\end{figure}
We also show this is a generalization: by taking $B = I$ the identity matrix, the characterizations reduce to the usual Injective and Surjective Matrix Theorems.
Furthermore, assuming $m = n$ recovers the original Invertible Matrix Theorem.
To better formulate that generalization, the language of \emph{linear relations} proves useful.
Thus, in what follows, we briefly introduce the topic.

\subsection{Linear Relations}
\label{sec:pigeonhole}

One of the simplest ways to motivate the use of relations in linear algebra involves the pigeonhole principle and drawing dots. Given two finite sets $A$ and $B$, we call any subset $R \subseteq A \times B$ a \emph{relation} between $A$ and $B$.
By representing these sets as dots on a sheet of paper, a relation $R$ becomes a graph connecting those dots (we draw a line connecting $a \in A$ to $b \in B$ if and only if $(a, b) \in R$), as seen in Figure~\ref{fig:fin-rel}.

\begin{figure}[h]
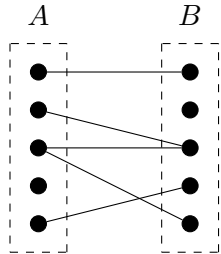

    \centering
    \tikzfig{introduction_figures/pigeonhole_0}
    \caption{A relation between two finite sets $A$ and $B$ represented as a bipartite graph.}
    \label{fig:fin-rel}
\end{figure}

\noindent It is always possible to represent a function $f : A \to B$ by its graph $\text{Gr}(f) \coloneq \{(a,b) \mid f(a) = b\} \subseteq A \times B$, making functions a special case of relations.
The pigeonhole principle states that if $f$ is injective, then $\# A \leq \# B$, where $\#A$ stands for its cardinality.
Figure~\ref{tab:pigeonhole_2} illustrates the graph of an injective function.
Notice how the pigeonhole principle forces
the set on the right to be at the as large as the set on the left.

\begin{figure}[h]
    \centering
    \begin{tabular}{l l}
    \tikzfig{introduction_figures/pigeonhole_1} & \makecell{Function: Any dot on the left connects to a unique dot on the right. \\
    INJ: Any dot on the right connects to at most one dot on the left.}
    \end{tabular}
    \caption{The graph of an injective function.}
    \label{tab:pigeonhole_2}
\end{figure}

Since every dot on the left connects to a different unique dot on the right, there must be at least as many right-dots as left-dots. The not-so-immediate observation is that we do not need uniqueness in the function property for the pigeonhole principle to hold. More concretely, we only need that every dot on the left connects to \emph{at least} one dot on the right.

\begin{figure}[h]
    \centering
    \begin{tabular}{l l}
    \tikzfig{introduction_figures/pigeonhole_2} & \makecell{TOT: Any dot on the left connects to at least one dot on the right. \\
    INJ: Any dot on the right connects to at most one dot on the left.}
    \end{tabular}
    \caption{The pigeonhole principle still works for the graph of a relation satisfying TOT+INJ.}
    \label{tab:pigeonhole_3}
\end{figure}

As seen in Figure~\ref{tab:pigeonhole_3}, the pigeonhole principle holds for some relations that are not functions (i.e. do not satisfy the function property). This tells us that the pigeonhole principle only requires relations and that its usual formulation about injective functions contains an unnecessary hypothesis.

These observations about functions of finite sets generalize to linear transformations. A \emph{linear relation} between vector spaces $V$ and $W$ is a subspace $R \subseteq V \times W$ of their Cartesian product.
By replacing the cardinality $\# X$ with the dimension $\text{dim}(X)$, the pigeonhole principle works the same, while the previous weakening procedure constructs two linear algebraic results shown in Table~\ref{tab:pigeonhole_4}.

\begin{table}[h]
    \centering
    \bgroup
      \renewcommand{\arraystretch}{1}
      \setlength\tabcolsep{1cm}
      \begin{tabular}{l l}
        \textbf{Pigeonhole for functions} & \textbf{Pigeonhole for linear transformations}  \\[1mm]
          If $f : A \to B$ is INJ, & If $f : V \to W$ is INJ, \\
          then $\# A \leq \# B$ & then $\text{dim}(V) \leq \text{dim}(W)$ \\[5mm]
          \textbf{Pigeonhole for relations} & \textbf{Pigeonhole for linear relations} \\[1mm]
          If $R \subseteq A \times B$ is TOT+INJ, & If $R \subseteq V \times W$ is TOT+INJ, \\
          then $\# A \leq \# B$ & then $\text{dim}(V) \leq \text{dim}(W)$
      \end{tabular}
    \egroup
    \caption{The usual pigeonhole principle can be weakened to relations and generalized to the context of linear algebra.}
    \label{tab:pigeonhole_4}
\end{table}

The usual pigeonhole principle of linear transformations is a standard result in linear algebra: injective matrices are tall.
Given how well-understood linear algebra is, the modified, relational pigeonhole principle is expected to also be equivalent to a well-known result.
Perhaps surprisingly, it is equivalent to the Exchange Lemma~\cite[Page 35; 2.22]{axler2024linear}, a fundamental result in linear algebra implying that all bases for a given vector space must have the same size. The precise translation from one to the other are explained later in Section~\ref{sec:aplications_poset_of_fundamental_properties}).
However, it already serves as a good example of how the language of relations produces unexpected connections between seemingly unrelated results. This could not be more appropriate for our goals: as discussed earlier, the Invertible Matrix Theorem is essentially a \emph{connecting} (or gluing) theorem for different fundamental properties.

To close this section, we properly define linear relations.
Also, from now on, we always work over a fixed field $\K$
and, since finite-dimensional vector spaces are completely determined
(up to isomorphism) by their dimension,
we restrict ourselves to subspaces $R \subseteq \K^m \times \K^n$.

\begin{definition}
\label{def:relation}
A \emph{linear relation} $R$ between finite-dimensional vector spaces $V, W$ is a subspace $R \subseteq V \times W$.
\end{definition}

\begin{definition}
\label{def:relation_operations}
The main operations on linear relations are the following.
\begin{enumerate}
\item[(i)] Given $R \subseteq \K^{m} \times \K^k$, $S \subseteq \K^k \times \K^n$, define their \emph{relational composition} $R \,;\, S \subseteq \K^m \times \K^n$ as
$$R \,;\, S \coloneq \{(x, z) \mid \exists y ((x, y) \in R, (y, z) \in S)\}.$$
\item[(ii)] Given $R \subseteq \K^m \times \K^n$, $S \subseteq \K^s \times \K^r$, define their \emph{Cartesian product}
$R \times S \subseteq \K^{m+s} \times \K^{n+r}$ as
$$R \times S \coloneq \{((x, w), (y, z)) \mid (x, y) \in R, (w, z) \in S\}.$$
\item[(iii)] Given $R \subseteq \K^m \times \K^n$, define its \emph{opposite} $R^{op} \subseteq \K^n \times \K^m$ as
$$R^{op} \coloneq \{(x,y) \mid (y, x) \in R\}.$$
\end{enumerate}
\end{definition}

\begin{definition}
\label{def:relation_poset}
Linear relations have a natural poset structure $\subseteq$ induced by the usual set-theoretic inclusion. For linear relations $R, S \in \K^n \times \K^m$,

\[
  R \subseteq S
  \iff
  \forall x \in \K^n,\, y \in \K^m,\; (x,y) \in R \implies (x, y) \in S.\]

\end{definition}

\noindent Linear relations are generalizations of linear maps and linear subspaces, in the sense that any subspace $V \subseteq \K^n$ can be seen as a $0$-by-$n$ linear relation $\{(0, x) \mid x \in v\} \subseteq \K^{0} \times \K^n$, where $\K^0 = \{0\}$ is the zero-dimensional vector-space, and any linear map $f : \K^m \to \K^n$ can be seen as a $m$-by-$n$ linear relation via its graph $\text{Gr}(f) \coloneq \{(x, f(x)) \mid x \in \K^m\}$. When restricted to linear maps, the operations (i) and (ii) above become the usual composition and Cartesian product of maps. Moreover, the composition $V \,;\, f$ is the subspace $f(V)$.

Relational composition and Cartesian product are associative operations, so they satisfy the laws $(R \,;\, S) \,;\, H = R \,;\, (S \,;\, H)$ and $(R \times S) \times H = R \times (S \times H)$. There is also a law describing how these two operations interact:
\begin{equation}\label{eq:2d_associativity_in_symbolic_syntax}
    (A \,;\, B) \times (C \,;\, D) = (A \times C) \,;\, (B \times D).
\end{equation}
As hinted by the discussion about the Pigeonhole Principle, the notions of surjectiveness and injectiveness extend to linear relations, together with two dual properties named \emph{totality} and \emph{determinism}.
\begin{definition}\label{def:fundamental_properties_classical} A linear relation $R \subseteq V \times W$ is \\
\begin{tabular}{r r l}
  (i) & injective if & $\forall y \in W \text{ there is at most one $x \in V$ such that } (x,y) \in R;$ \\
  (ii) & surjective if & $\forall y \in W \text{ there is at least one $x \in V$ such that } (x,y) \in R;$ \\
  (iii) & deterministic if & $\forall x \in V \text{ there is at most one $y \in W$ such that } (x,y) \in R;$ \\
  (iv) & total if & $\forall x \in V \text{ there is at least one $y \in W$ such that } (x,y) \in R.$
\end{tabular}
\end{definition}

\subsection{Graphical Notation}
\label{sec:graphical_notation}

In this paper, linear relations are represented using string diagrams.
That is, instead of the usual symbolic syntax,
relations are written as two-dimensional diagrams and composed similarly to logical gates.
This is known as \emph{graphical notation} and is widely used among category theorists.
According to \cite{hinze2023introducing},
diagrams enable equational reasoning without loss of type information.
The use of diagrams in linear algebra as seen here was developed by Zanasi in his thesis~\cite{ZanasiThesis},
leading to what is now called Graphical Linear Algebra (GLA)~\cite{bonchi2014categorical,Bonchi2015,bonchi2017refinement,Bonchi2019a, PAIXAO2022}.

In what follows we give a brief introduction
to enable the reader to read proofs written with diagrams.
For a more thorough account of Graphical Linear Algebra,
we recommend the blog series by Pawe{\l} Soboci\'{n}ski~\cite{blogpawel}.

A $m$-by-$n$ linear relation can be written as a two-dimensional diagram with $m$ dangling wires on the left and $n$ on the right.
\begin{equation}
R \subset \K^m \times \K^n \mapsto \quad \tikzfig{introduction_figures/n_wired_dots_relation}.
\end{equation}
For example, a $1$-by-$2$ linear relation $R$ can be written as a diagram \tikzfig{introduction_figures/1_by_2_relation} with one wire on the left and two wires on the right. Generally, when $R$ is a $m$-by-$n$ linear relation, it is more convenient to write \tikzfig{introduction_figures/m_by_n_relation}, and when there is no confusion about the type information, we may omit the numbers $m, n$ and write it as \Rel{R}. Lastly, when one of the numbers is zero, say, $m = 0$,  then $R$ is called a linear subspace and we omit the dangling wire on the left, writing \tikzfig{introduction_figures/0_by_n_relation}.

We also reserve different symbols based on how specialized the linear relation is:
while the box symbol $\Rel{R}$ denotes a general linear relation,
the curvy symbol $\Map{A}$ is used when we know $A$ is a linear map.
Moreover, the blue curvy symbol $\Inv{A}$ is used when $A$ is known to have an inverse,
usually denoted $A^{-1}$.
In the diagrammatic notation, operations~(i) and~(ii) of Definition~\ref{def:relation_operations} become, respectively, the action of connecting diagrams horizontally and stacking them vertically, and the action of flipping the diagram horizontally.
Table~\ref{tab:graphical_notation} summarizes what we have said so far.

\begin{table}[htb]
    \centering
    \tikzfig{introduction_figures/graphical_notation}
    \caption{Graphical notation.}
    \label{tab:graphical_notation}
\end{table}

Under the graphical syntax, both sides in Equation~\eqref{eq:2d_associativity_in_symbolic_syntax} become the two ways we can put parenthesis in the diagram composition
\begin{equation}
\tikzfig{introduction_figures/2d_associativity}
\end{equation}
which is very similar to an associativity law. Informally speaking, the difference is that the parentheses move in two dimensions, instead of simply from left-to-right. Due to this law, we will always omit the dashed lines when composing diagrams.

There are specific symbols for some commonly occurring linear relations called identity, twist, zero, sum, copy, and discard. These are all graphs of linear maps. Table~\ref{tab:generators} shows their corresponding diagrammatic symbols and linear maps.

\begin{table}[htb]
  \centering
  \renewcommand{\arraystretch}{2.5}
  \begin{tabular}{r l l l}
    (Identity) & \TypedId{n} & $x \mapsto x$ & $: \K^n \to \K^n$  \\
    (Twist) & \tikzfig{introduction_figures/n_wired_twist} & $(y \in \K^m, x \in \K^n) \mapsto (y, x)$ & $: \K^{m+n} \to \K^{m+n}$ \\
    (Zero) & \TypedZero{n} & $x \mapsto 0$ & $: \K^0 \to \K^n$ \\
    (Sum) & \tikzfig{introduction_figures/n_wired_sum} & $(x \in \K^n, y \in \K^n) \mapsto x + y$ & $: \K^{2n} \to \K^n$  \\
    (Discard) & \TypedDiscard{n} & $x \mapsto 0$ & $: \K^n \to \K^0$ \\
    (Copy) & \tikzfig{introduction_figures/n_wired_copy} & $x \mapsto (x, x)$ & $: \K^n \to \K^{2n}$ \\
  \end{tabular}
  \caption{Special relations together with their names and graphical notation.}
  \label{tab:generators}
\end{table}

In the same way as before, we will often omit the numbers $m,n$ when there is no potential for confusion. The zero diagram \TypedZero{n} can be thought of as the zero subspace $\{0\} \subseteq \K^n$, whereas the opposite of the discard, \TypedCoDiscard{n}, can be thought of as the full space $\K^n \subseteq \K^n$.
Additionally, any linear map $A$ satisfies
\begin{equation}\label{eq:map_equations}
\begin{tabular}{l l}
    $\Image{A} = \; \tikzfig{introduction_figures/image_of_A}$, & $\Kernel{A} = \; \tikzfig{introduction_figures/kernel_of_A}$, \\
    $\CoImage{A} = \; \CoDiscard$, & $\CoKernel{A} = \; \Zero$.
\end{tabular}
\end{equation}
From this, we see that a map $A$ is surjective if and only if \Image{A} = \CoDiscard and injective if and only if \Kernel{A} = \Zero. This is still true for linear relations in general. More concretely, all four properties in Definition~\ref{def:fundamental_properties_classical} can be written in terms of the white and black structures, as stated below.
\begin{proposition}[Fundamental Properties]
  \label{def:fundamental_properties_and_invertibility}
  A linear relation \Rel{R} is said to be
\begin{center}
\renewcommand{\arraystretch}{2.0}
\resizebox{\textwidth}{!}{\begin{tabular}{l l l l}
    total (TOT) if &\RelCoImage{R} \!\!= \CoDiscard \!\!\!, & deterministic (DET) if &\RelCoKernel{R} \!\!= \Zero \!\!\!, \\
    surjective (SUR) if &\RelImage{R} \!\!= \CoDiscard \!\!\!, & injective (INJ) if &\RelKernel{R} \!\!= \Zero \!\!\!, \\
    a map if it is &TOT and DET, &  bijective if it is & TOT, DET, INJ, and SUR.
\end{tabular}}
\end{center}
\end{proposition}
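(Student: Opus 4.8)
The plan is to show that each of the four diagrammatic equations is equivalent to the corresponding clause of Definition~\ref{def:fundamental_properties_classical}, after which the last two rows follow immediately. The first step is to translate every composite on the left-hand sides into its set-theoretic meaning by unfolding the relational composition of Definition~\ref{def:relation_operations}(i) together with the readings of the generators in Table~\ref{tab:generators}. The two facts I would use repeatedly are that \CoDiscard denotes the full space (as the opposite of Discard it is $\{(0,x)\mid x\in\K^n\}$, i.e.\ all of $\K^n$) and that \Zero denotes the trivial subspace $\{0\}$. Composing, I read off $\RelImage{R}=\{y\mid \exists x,\ (x,y)\in R\}$ and, dually, $\RelCoImage{R}=\{x\mid \exists y,\ (x,y)\in R\}$, while forcing one leg to be zero gives $\RelKernel{R}=\{x\mid (x,0)\in R\}$ and $\RelCoKernel{R}=\{y\mid (0,y)\in R\}$.

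For the \emph{existence} properties (SUR and TOT) the equivalence is then direct. The equation $\RelImage{R}=\CoDiscard$ says the image subspace is all of $W$, which is exactly the statement that every $y\in W$ has at least one $x$ with $(x,y)\in R$; this is clause (ii). Symmetrically, $\RelCoImage{R}=\CoDiscard$ says the domain is all of $V$, i.e.\ every $x\in V$ has at least one $y$, which is clause (iv).

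For the \emph{uniqueness} properties (INJ and DET) I would invoke linearity, which is the one genuinely non-formal step. Because $R$ is a subspace, if $(x_1,y)\in R$ and $(x_2,y)\in R$ then $(x_1-x_2,0)\in R$; conversely, adding any element of $\{x\mid (x,0)\in R\}$ to a preimage of $y$ produces another one. Hence ``at most one $x$ per $y$'' (clause (i)) holds if and only if $(x,0)\in R\Rightarrow x=0$, i.e.\ $\RelKernel{R}=\{0\}=\Zero$. The identical argument applied to the second coordinate shows clause (iii) is equivalent to $\RelCoKernel{R}=\Zero$.

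Finally, the last two rows are formal consequences of the definitions: a relation that is total and deterministic assigns to each $x\in V$ exactly one $y\in W$, which is precisely the graph of a map, and requiring injectivity and surjectivity in addition makes that map bijective. The main obstacle I anticipate is purely bookkeeping: correctly interpreting the diagrammatic composites --- in particular pinning down that capping a leg with \CoDiscard computes an image or domain whereas capping with \Zero computes a kernel --- so that the unfolding in the first step matches the four right-hand sides. Once those readings are fixed, the linearity observation behind INJ and DET is the only real content.
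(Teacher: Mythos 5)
Your proposal is correct: unfolding the diagrammatic composites into their set-theoretic readings and then using linearity of the subspace $R$ to turn the ``at most one'' clauses into trivial-kernel/cokernel conditions is exactly the verification that backs up this statement, which the paper itself leaves implicit (it treats the proposition essentially as a definition, remarking only that it coincides with Definition~\ref{def:fundamental_properties_classical} ``by our previous considerations''). The one genuinely non-formal step you identify --- that $(x_1,y),(x_2,y)\in R$ forces $(x_1-x_2,0)\in R$, and conversely that $(0,0)\in R$ gives $y=0$ multiple preimages whenever the kernel is nontrivial --- is handled correctly, so nothing is missing.
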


\noindent By our previous considerations, this definition coincides with the usual definition of surjectiveness and injectiveness when \Rel{R} is a linear map.
Also, notice that the two inequalities
\begin{equation}
    \RelImage{R} \subseteq \; \CoDiscard, \quad \RelKernel{R} \supseteq \; \Zero
\end{equation}
hold for any relation.
Therefore, throughout the paper, when showing that a relation is TOT, DET, INJ or SUR, we often do not prove the equality required by Definition~\ref{def:fundamental_properties_and_invertibility}, and instead only show the corresponding non-trivial inequality. For example, to show that a linear relation $\Rel{R}$ is total, it is enough to show that $\RelImage{R} \supseteq \; \CoDiscard$.
See Figure~\ref{fig:Inequality} for which inequalities hold for every relation.

As this paper deals with the Invertible Matrix Theorem in a relational context,
we define below what it means for a map to have an inverse.
Notice that we do not assume a bijective map has an inverse,
as this will result from the Invertible Matrix Theorem, proven later in the text.
\begin{definition}
\label{def:inverse}
  Given a map $\Map{A}$, a map $\Map{B}$ is called
  \begin{enumerate}
      \item[(i)] a left-inverse\footnote{It is worth noting that the left-inverse $\Map{B}$ appears at the \emph{right} of $\Map{A}$, and the right-inverse appears at the \emph{left}. This is because the relational composition $A ; B$ has its arguments flipped with respect to the standard composition (i.e. the composition of matrices $BA$ is written with diagrams as $\Comp{A/Map, B/Map}$). We maintain this confusing terminology in order to stay consistent with the standard literature.} if $\Comp{A/Map, B/Map} = \Id$,
      \item[(ii)] a right-inverse if $\Comp{B/Map, A/Map} = \Id$,
      \item[(iii)] an inverse if it is a right-inverse and a left-inverse.
  \end{enumerate}
\end{definition}
\noindent Also, notice that an inverse is unique
and that if a map has both a left and a right-inverse, they must be equal
and the map has an inverse.

It is important to note that the poset structure $\subseteq$
interacts well with the two composition operations~\cite{PAIXAO2022}.
More concretely, if $\Rel{X} \subseteq \Rel{Y}$, then for any relations $\Rel{A}, \Rel{B}$ we have
\begin{align*}
  \tikzfig{introduction_figures/poset_vs_composition_1}
  &,&
  \tikzfig{introduction_figures/poset_vs_composition_2}.
\end{align*}
We will often use this to produce inequalities between complicated diagrams. For example, we know $\Zero \subseteq \CoDiscard$, therefore, the following inequality must hold, regardless of what $A$ and $B$ are:
\[\tikzfig{introduction_figures/poset_vs_composition_3}.\]

All proofs in this paper consist of sequences of diagram manipulations.
Appendix~\ref{sec:gla_thms} contains all necessary rules in their diagrammatic form.
The proof style follows a calculational format~\cite{Dijkstra1989PredicateCA},
with a ``proof hint'' in square brackets to the right of a diagram or statement
explaining which result was used  to conclude it from the previous one.
To exemplify, here is the proof that any map $\Map{A}$ with a left-inverse $\Map{B}$
is injective:
\begin{hcalculation}[\subseteq]{\Kernel{A}}
  \hstep{\CompTwo{White, CoMap, B, CoMap, A, .}}{Fig.\ref{fig:Minimum_and_maximum}: $\Zero \subseteq \Kernel{}$}
  \hstep[=]{\Zero}{Hyp: $\Comp{A/Map, B/Map} = \Id$}
\end{hcalculation}
On the right, the manipulation rules in Figure~\ref{fig:Minimum_and_maximum}
and the theorem's hypothesis
are noted as the steps for going from the first diagram to the last one.

The propositions in Figure~\ref{fig:Symmetry},
reproduced below,
\[
  \begin{aligned}
    \Rel{R} & \subseteq \Rel{S} &\iff \CoRel{R} &\subseteq \CoRel{S}, \\
    \RelGray{R} & \subseteq \Rel{S} &\iff \Rel{R} &\supseteq \RelGray{S}.
  \end{aligned}
\]
imply that any inequality between relations gives rise to other two inequalities
by either mirroring all relations to their opposites or swapping their colors.
This way, any theorem in diagrammatic form actually summarizes 4 theorems.
As a corollary, the opposites and color-swapped versions of an \emph{equality} are also equalities.
This will be used thoroughly in this paper without further comments.

Finally, we note that the advantages of diagram manipulation were the main reason for choosing graphical notation as a tool in this work.
However, it is important to emphasize that the results obtained do not depend exclusively on this notation;
all proofs can be equivalently written in the usual syntax.

\subsection{Main Contributions}

This paper main contributions are the characterization of fundamental properties
in Theorem~\ref{thm:cospan_dict} generalizing the Invertible Matrix Theorem for linear relations
and the graph of fundamental properties implications in Theorem~\ref{thm:poset_of_fundamental_properties}.
To tackle those results, in Theorem~\ref{thm:rcd_for_cospan},
we develop a decomposition for linear relations in cospan form
\[ \Rel{R} = \CoSpan{A}{B} = \GRR,\]
which acts as a generalization of the Canonical Decomposition for matrices.

Furthermore, this decomposition induces a \emph{generalized matrix decomposition}
simultaneously decomposing the cospan's factors as
\[\tikzfig{rcd_for_pairs_of_matrices/stat}.\]
The linking matrix $\Map{H}$ is constructed in Theorems~\ref{thm:abstract} and~\ref{thm:rcd_for_pairs_of_matrices},
and, as shown in Theorem~\ref{thm:subspaces},
can be used to calculate all subspaces related to the images $\im(A)$ and $\im(B)$.
It is also worth noticing that this decomposition is analogous to the Generalized Singular Value Decomposition (GSVD)~\cite{gsvd1976,gsvd1981},
but defined without assumptions on the underlying scalar field.

\section{Fundamental Properties}\label{sec:fundamental_properties}

We start our discussion by dissecting the proof of
the original Invertible Matrix Theorem (IMT),
as a motivation for the decisions taken later in the text regarding its relational version.
As mentioned in Section~\ref{sec:introduction},
IMT is a two-part theorem, starting with characterizing surjectivity and injectivity
i.e., the fundamental properties of a linear map.
For an $m$-by-$n$ linear map $A$,
such a characterization can be split into two blocks,
as illustrated in Figure~\ref{fig:imt-graph},
\begin{enumerate}
  \item Statements regarding the existence of a right or left-inverse $B$ to $A$ or, equivalently, solutions to linear systems.
  \item Statements concerning only the map $A$ and its properties;
\end{enumerate}

\begin{figure}[h]
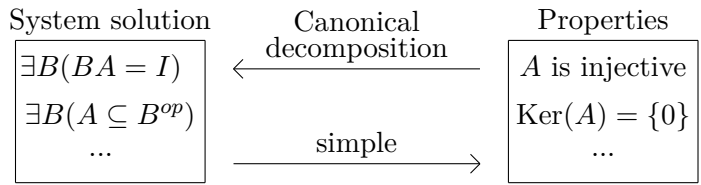

    \centering
    \tikzfig{thm_cospan_dict/blocks}
    \caption{Statements in the same block can easily be shown to be equivalent. However, going from a right-statement to a left-statement requires a matrix decomposition or an equivalent algorithm such as Gaussian elimination to construct the linear map $B$.}
    \label{fig:imt-graph}
\end{figure}

Statements in the same block are straightforwardly shown to be equivalent.
Also, the existence of an inverse $B$ directly implies the properties of interest,
such as injectivity or trivial kernel.
The hard part of proving the IMT is showing that those properties are enough to to assure the existence of an inverse.
This process involves finding an algorithm (e.g. Gaussian elimination)
to construct said linear map $B$.
Instead of reproducing the algorithm inside the proof,
this is usually appears as a rank-revealing matrix decomposition
\[
  \label{eq:functional_canonical_decomposition}
  A = P^{-1}\begin{bmatrix} I & 0 \\ 0 & 0 \\\end{bmatrix} Q^{-1}.
\]

Once the characterizations are proven, the decomposition in Equation~\eqref{eq:functional_canonical_decomposition}
can be used to derive a graph connecting the properties SUR and INJ with the dimensionality properties $m \leq n$ and $m \geq n$ shown in Figure~\ref{fig:intro_poset_of_4_fundamental_properties}.
This gives a general guideline of how to prove the IMT:
\begin{enumerate}
    \item[(i)] For each of the properties SUR, INJ, prove the equivalencies in both blocks separately;
    \item[(ii)] Connect the two blocks via a canonical decomposition;
    \item[(iii)] Use the decomposition once more to build a graph connecting the fundamental properties SUR and INJ to the dimensionality properties $m \leq n$ and $m \geq n$.
\end{enumerate}

\subsection{Preliminary Characterizations}

We now follow these general guidelines to generalize the IMT to the context of linear relations.
First, we state a classical result,
asserting that any linear relation $R$ can be realized as the solution set $\{(x, y) \mid Ax = By\}$
for some pair of linear maps $A, B$.
This is known as the \emph{cospan form} of $R$.

\begin{proposition}\label{prop:cospan}
For every linear relation $\TypedRel{R}{m}{n}\!$, there exist $\TypedMap{A}{m}{k}\!\!$, $\TypedMap{B}{n}{k}\!\!$ linear maps such that
\[\Rel{R} = \CoSpan{A}{B}.\]
\end{proposition}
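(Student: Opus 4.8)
The plan is to reduce the statement to the standard fact that every subspace of a finite-dimensional vector space is the kernel of some linear map, and then to split that map along the product decomposition $\K^m \times \K^n$. First I would regard $R$ simply as a subspace of $\K^{m+n} \cong \K^m \times \K^n$ and set $k \coloneq (m+n) - \dim R$, its codimension. Because $R$ is a subspace, there is a surjective linear map $C \colon \K^{m+n} \to \K^k$ with $\ker C = R$: concretely, one may take $C$ to be the quotient map onto $\K^{m+n}/R$ composed with a chosen isomorphism $\K^{m+n}/R \cong \K^k$, or, more explicitly, extend a basis of $R$ to a basis of $\K^{m+n}$ and let $C$ send the basis vectors of $R$ to $0$ and the remaining $k$ vectors to a basis of $\K^k$.

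Next I would exploit the block structure of $C$ coming from writing each vector of $\K^{m+n}$ as a pair $(x,y)$ with $x \in \K^m$ and $y \in \K^n$. Restricting $C$ to the two factors yields linear maps $C_1 \colon \K^m \to \K^k$ and $C_2 \colon \K^n \to \K^k$ with $C(x,y) = C_1 x + C_2 y$ for all $(x,y)$. Then $(x,y) \in R \iff C(x,y) = 0 \iff C_1 x = -C_2 y$, so setting $A \coloneq C_1$ and $B \coloneq -C_2$ gives precisely $R = \{(x,y) \mid Ax = By\}$, which is the cospan form in the statement, with $A$ an $m$-by-$k$ map and $B$ an $n$-by-$k$ map as required.

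I do not expect a serious obstacle: the entire content is the realization of a subspace as a kernel, which is routine finite-dimensional linear algebra, and the only point demanding care is the bookkeeping of which block carries the minus sign. It is worth noting that the $k$ produced this way equals the codimension of $R$, but any larger $k$ works just as well after padding $C$ with zero rows, a freedom that the later rank-revealing decompositions (Theorem~\ref{thm:rcd_for_cospan}) will refine. The same argument transcribes directly into the diagrammatic syntax, but the symbolic version above is the most transparent way to exhibit the maps $A$ and $B$.
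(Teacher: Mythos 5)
Your proof is correct and is essentially the paper's own argument: the paper establishes this as Lemma~\ref{lem:relation_in_cospan_form_complete} by bending $R$ into a subspace of $\K^{m+n}$ (Snake), realizing that subspace as the kernel of a linear map via Lemma~\ref{thm:subspace_image}, splitting that map into its two blocks, and flipping one block back across the cup---which is precisely your $C$, $C_1$, $C_2$, and the sign in $B = -C_2$. The only difference is presentational: you work in symbolic notation, while the paper carries out the identical steps diagrammatically.
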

\begin{proof}
   See Lemma \ref{lem:relation_in_cospan_form_complete}.
\end{proof}

Due to this result, we express the following theorems in terms of the maps $\Map{A}$ and $\Map{B}$ instead of the linear relation $\Rel{R}$.
As in Definition~\ref{def:fundamental_properties_and_invertibility}, there are two additional fundamental properties called determinism and totality.
The first step is to prove a characterization theorem for each fundamental property
without involving invertibility or the existence of solutions.

\begin{proposition}
  \label{prop:r_block}
  Given linear maps $\TypedMap{A}{m}{k}\!\!$, $\TypedMap{B}{n}{k}\!\!$, each column of the following table consists of a series of equivalent statements.
\[\resizebox{\textwidth}{!}{\tikzfig{thm_cospan_dict/R_block}}\]
\end{proposition}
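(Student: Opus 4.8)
The plan is to exploit the cospan form from Proposition~\ref{prop:cospan}, writing $\Rel{R} = \CoSpan{A}{B}$, and to reduce every entry of a given column to a condition on the maps $A$ and $B$ alone. The idea is to substitute this cospan form into the graphical definitions of Proposition~\ref{def:fundamental_properties_and_invertibility} and then simplify: forcing the right-hand boundary of the cospan to vanish collapses the $B$-branch to the zero subspace (so the white equalizer only remembers the constraint $Ax = 0$), while leaving a boundary unconstrained collapses the corresponding branch to its full image. The whole argument is then four short diagrammatic computations, two of which I would carry out explicitly and two of which come for free by symmetry.

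For injectivity I would compute the kernel of $R$. After substituting $\Rel{R} = \CoSpan{A}{B}$ and forcing the right boundary to be zero, the $B$-branch becomes the zero subspace and the equalizer demands $Ax = 0$; the spider and zero-absorption rules then leave exactly $\ker A$. Hence the kernel of $R$ equals the kernel of $A$, and the chain
\[ R \text{ injective} \iff \ker R = 0 \iff \ker A = 0 \iff A \text{ injective} \]
establishes the entire injectivity column: the first equivalence is Proposition~\ref{def:fundamental_properties_and_invertibility}, the middle one is the computation just described, and the last is the map equation~\eqref{eq:map_equations}. By the remark following Proposition~\ref{def:fundamental_properties_and_invertibility}, at each step I need only verify the nontrivial inclusion, since the reverse inclusion holds for every relation.

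For surjectivity I would dually compute the image of $R$. Leaving the left boundary unconstrained replaces the $A$-branch by $\im(A)$, so the equalizer records the constraint $By \in \im(A)$; consequently the image of $R$ is full if and only if $\im(B) \subseteq \im(A)$, which is the graphical statement $\Image{B} \subseteq \Image{A}$. This settles the surjectivity column. The determinism and totality columns then follow without further diagram chasing from the opposite symmetry of Section~\ref{sec:graphical_notation}: since $R^{op} = \CoSpan{B}{A}$ is again a cospan with $A$ and $B$ interchanged, and taking opposites swaps INJ with DET and SUR with TOT, the determinism column is the injectivity column applied to $R^{op}$ (yielding ``$B$ injective'') and the totality column is the surjectivity column applied to $R^{op}$ (yielding $\im(A) \subseteq \im(B)$).

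I expect the main obstacle to be the two core diagrammatic reductions---showing that the kernel of the cospan collapses to $\ker A$ and that its image collapses to the preimage condition $\{y \mid By \in \im(A)\}$---since these require carefully tracking how the zero and discard generators fuse through the white equalizer using the Frobenius and spider laws collected in Appendix~\ref{sec:gla_thms}. Once those two reductions are established, the equivalences within each column are immediate from Proposition~\ref{def:fundamental_properties_and_invertibility} and the map equations~\eqref{eq:map_equations}, and the opposite symmetry eliminates any need to repeat the work for the dual properties.
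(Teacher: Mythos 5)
Your reduction of the first two rows of each column is sound: computing the kernel of the cospan gives $\ker(A)$, computing its image gives the preimage condition $\im(B)\subseteq\im(A)$, and the opposite symmetry (which sends $\CoSpan{A}{B}$ to $\CoSpan{B}{A}$ and exchanges INJ with DET and SUR with TOT) legitimately transports these results to the other two columns. Up to the choice of which two columns to write out explicitly, and up to your argument being semantic where the paper's is purely diagrammatic, this part agrees with what the paper does for the equivalence between the property of the cospan and the image/kernel conditions on $A$ and $B$.

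The gap is that each column of the table has a \emph{third} entry that your proof never touches. Besides the statement that $\CoSpan{A}{B}$ has the fundamental property and the corresponding image or kernel condition, each column also contains a comparison of doubled spans and cospans: for totality the condition $\Span{A}{A} \subseteq \Span{B}{B}$, for determinism a kernel condition on the cospan $\CoSpan{B}{B}$, and dually for the other two columns. Two of the three implications in the paper's cyclic proof (Row 2 $\implies$ Row 3 and Row 3 $\implies$ Row 1) exist precisely to bring this third row into the equivalence, and they require their own diagrammatic work: padding with the unit, applying the Flip laws to move $\Map{A}$ or $\Map{B}$ across the black or white structure, and only then invoking the hypothesis $\Image{A}\subseteq\Image{B}$ or $\Kernel{B}\subseteq\Zero$ inside the larger diagram. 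Your chain ``$R$ is INJ $\iff$ $\ker(R)=0$ $\iff$ $\ker(A)=0$'' therefore establishes only two of the three claimed equivalences per column; to complete the proposition you would need an additional argument showing, for instance, that $\Image{A}\subseteq\Image{B}$ implies $\Span{A}{A}\subseteq\Span{B}{B}$ and that the latter in turn forces totality of the cospan.
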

\begin{proof}
We first show the TOT case, then the DET case.
The other two follow similarly.
The strategy is Row 1 $\Rightarrow$ Row 2 $\Rightarrow$ Row 3 $\Rightarrow$ Row 1.

(Row 1 $\implies$ Row 2)
\begin{hcalculation}[\subseteq]{\Image{A}}
\hstep[=]{\CompThree{Black, Map, B, CoMap, A, Map, A, .}}{Hyp: $\CompTwo{Black, Map, B, CoMap, A, .} = \CoDiscard$}
\hstep{\Image{B}}{Fig.\ref{fig:typerelations}-DET: $\CoSpan{\;}{\;} \subseteq \Id$}
\end{hcalculation}

(Row 2 $\implies$ Row 3)
\begin{hcalculation}[=]{\Span{A}{A}}
\hstep{\tikzfig{thm_cospan_dict/row_2_implies_row_3_step_2}}{Fig.\ref{fig:Commutative_Comonoid}: \tikzfig{theorems/unitblack}}
\hstep{\tikzfig{thm_cospan_dict/row_2_implies_row_3_step_3}}{Fig.\ref{fig:Flip}: \tikzfig{theorems/flipblack}}
\hstep[\subseteq]{\tikzfig{thm_cospan_dict/row_2_implies_row_3_step_4}}{Hyp: $\Image{A} \subseteq \Image{B}$}
\hstep{\tikzfig{thm_cospan_dict/row_2_implies_row_3_step_5}}{Fig.\ref{fig:Flip}: \tikzfig{theorems/flipblack}}
\hstep{\Span{B}{B}}{Fig.\ref{fig:Commutative_Comonoid}: \tikzfig{theorems/unitblack}}
\end{hcalculation}

(Row 3 $\implies$ Row 1)
\begin{hcalculation}[=]{\CompTwo{., Map, A, CoMap, B, Black}}
\hstep{\CompThree{., Map, A, CoMap, B, Map, B, Black}}{Fig.\ref{fig:typerelations}-TOT: $\Diagram{Map}{}{.}{Black} = \Discard$}
\hstep[\supseteq]{\CompThree{., Map, A, CoMap, A, Map, A, Black}}{Hyp: $\Span{B}{B} \supseteq \Span{A}{A}$}
\hstep[\supseteq]{\Diagram{Map}{A}{.}{Black}}{Fig.\ref{fig:typerelations}-TOT: $\CoSpan{\;}{\;} \supseteq \Id$}
\hstep{\Discard}{Fig.\ref{fig:typerelations}-TOT: $\Diagram{Map}{}{.}{Black} = \Discard$}
\end{hcalculation}

\noindent Now we show the DET column. The proof is analogous, but we include it for the sake of clarity.

(Row 1 $\implies$ Row 2)
\begin{hcalculation}[\subseteq]{\Kernel{B}}
\hstep[=]{\CompTwo{White, Map, A, CoMap, B, .}}{$\CoKernel{A} = \Zero$}
\hstep[\subseteq]{\Zero}{Hyp: $\CompTwo{White, Map, A, CoMap, B, .} \subseteq \Zero$}
\end{hcalculation}

(Row 2 $\implies$ Row 3)
\begin{hcalculation}[=]{\CoSpan{B}{B}}
\hstep{\tikzfig{thm_cospan_dict/row_2_implies_row_3_step_2_DET}}{Fig.\ref{fig:Commutative_Comonoid}: \tikzfig{theorems/unit}}
\hstep{\tikzfig{thm_cospan_dict/row_2_implies_row_3_step_3_DET}}{Fig.\ref{fig:Flip}: \tikzfig{theorems/flip}}
\hstep[\subseteq]{\tikzfig{thm_cospan_dict/row_2_implies_row_3_step_4_DET}}{Hyp: $\Kernel{B} \subseteq \Zero$}
\hstep{\Zero}{Fig.\ref{fig:Flip}: \tikzfig{theorems/unit}}
\end{hcalculation}

(Row 3 $\implies$ Row 1)
\begin{hcalculation}[=]{\CompTwo{White, Map, A, CoMap, B, .}}
\hstep{\Kernel{B}}{$\CoKernel{A} = \Zero$}
\hstep{\CompTwo{White, Map, B, CoMap, B, .}}{$\CoKernel{B} = \Zero$}
\hstep[\subseteq]{\Zero}{Hyp: $\CompTwo{White, Map, B, CoMap, B, .} \subseteq \Zero$}
\end{hcalculation}
\end{proof}

Now we move our focus to an equivalence for existence theorem
about the constituent parts of a cospan.
These results are related to invertibility, as we will shortly see.

\begin{proposition}
  \label{prop:s_block}
  For maps $\TypedMap{A}{m}{k}\!\!$, $\TypedMap{B}{n}{k}\!\!$,
\begin{enumerate}
  \item[(i)] $\exists \Map{S_1}, \Map{S_1} \subseteq \CoSpan{A}{B}
      \iff
      \exists \Map{S_1}, \Comp{S_1/Map, B/Map} = \Map{A} $,
  \item[(ii)] $\exists \Map{S_2}, \CoMap{S_2} \subseteq \CoSpan{A}{B}
      \iff
      \exists \Map{S_2}, \Comp{S_2/Map, A/Map} = \Map{B} $.
\end{enumerate}
\end{proposition}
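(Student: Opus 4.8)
The plan is to establish the stronger \emph{pointwise} version of~(i): for a single fixed map $\Map{S_1}$,
\[ \Map{S_1} \subseteq \CoSpan{A}{B} \iff \Comp{S_1/Map, B/Map} = \Map{A}. \]
Quantifying existentially over $\Map{S_1}$ on both sides then yields the stated equivalence, with the \emph{same} witness $S_1$ serving for both directions. The two diagrammatic facts I will lean on are that every map $\Map{B}$ is \emph{total} and \emph{deterministic}, recorded in Figure~\ref{fig:typerelations} as $\Id \subseteq \CoSpan{B}{B}$ and $\CoMap{B}\,;\,\Map{B} \subseteq \Id$, together with the monotonicity of relational composition with respect to $\subseteq$.

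For the direction $(\Leftarrow)$, I would assume $\Comp{S_1/Map, B/Map} = \Map{A}$ and rewrite the left leg of the cospan accordingly, so that $\CoSpan{A}{B}$ factors as $\Map{S_1}\,;\,\CoSpan{B}{B}$. Since $B$ is total we have $\Id \subseteq \CoSpan{B}{B}$, and monotonicity gives
\[ \Map{S_1} \;=\; \Map{S_1}\,;\,\Id \;\subseteq\; \Map{S_1}\,;\,\CoSpan{B}{B} \;=\; \CoSpan{A}{B}, \]
which is exactly $\Map{S_1} \subseteq \CoSpan{A}{B}$.

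For the direction $(\Rightarrow)$, I would start from $\Map{S_1} \subseteq \CoSpan{A}{B}$ and post-compose both sides with $\Map{B}$. Unfolding the cospan as $\Map{A}\,;\,\CoMap{B}$ and using determinism of $B$ in the form $\CoMap{B}\,;\,\Map{B} \subseteq \Id$, the right-hand side collapses:
\[ \Comp{S_1/Map, B/Map} \;\subseteq\; \CoSpan{A}{B}\,;\,\Map{B} \;=\; \Map{A}\,;\,\CoMap{B}\,;\,\Map{B} \;\subseteq\; \Map{A}. \]
At this point I only have the inclusion $\Comp{S_1/Map, B/Map} \subseteq \Map{A}$, whereas the claim is an equality. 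This upgrade is the main obstacle: it requires knowing that a map contained in another map must coincide with it. Intuitively, $\Comp{S_1/Map, B/Map}$ is again a composite of maps and hence total, so its graph already ranges over every input and is thus forced to agree with $\Map{A}$ on the nose. I expect to invoke (or prove as a short auxiliary lemma) the precise statement that $\Map{F} \subseteq \Map{G}$ implies $\Map{F} = \Map{G}$ for maps $F, G$.

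Finally, item~(ii) should not require a separate argument: it is the \emph{opposite} of~(i). By the symmetry from Figure~\ref{fig:Symmetry}, the condition $\CoMap{S_2} \subseteq \CoSpan{A}{B}$ is equivalent to $\Map{S_2} \subseteq \CoSpan{B}{A}$, since $\CoSpan{B}{A}$ is the opposite of $\CoSpan{A}{B}$ and the two legs swap roles. Invoking the already-proven~(i) for the cospan $\CoSpan{B}{A}$ then turns $\Map{S_2} \subseteq \CoSpan{B}{A}$ into the equation $\Comp{S_2/Map, A/Map} = \Map{B}$, so~(ii) follows for free.
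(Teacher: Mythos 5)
Your proposal is correct and follows essentially the same route as the paper: both directions use totality ($\Id \subseteq \CoSpan{B}{B}$) and determinism ($\Span{B}{B} \subseteq \Id$) of the map $B$ exactly as you describe, and the upgrade from $\Comp{S_1/Map, B/Map} \subseteq \Map{A}$ to an equality is handled by the paper's Lemma~\ref{lem:pair_of_matrices_1}, which is precisely the auxiliary fact you anticipated (a map contained in a map equals it). Your derivation of~(ii) from~(i) by taking opposites is just the Mirror-Image symmetry the paper invokes implicitly when it says the second item is ``analogous.''
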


\begin{proof}
  We prove item (i). The other is analogous.
  \begin{enumerate}
    \item[$(\Rightarrow)$]
    \begin{hcalculation}[=]{\Comp{S_1/Map, B/Map}}
      \hstep[\subseteq]{\Comp{A/Map, B/CoMap, B/Map}}{Hyp: $\Map{S_1} \subseteq \Span{A}{B}$}
      \hstep[\subseteq]{\Map{A}}{Fig.\ref{fig:typerelations}-DET: $\Span{\;}{\;} \subseteq \Id$}
    \end{hcalculation}
    The equality follows by Lemma~\ref{lem:pair_of_matrices_1}.
    \item[$(\Leftarrow)$]
    \begin{hcalculation}[=]{\Map{S_1}}
      \hstep[\subseteq]{\Comp{S_1/Map, B/Map, B/CoMap}}{Fig.\ref{fig:typerelations}-TOT: $\Id \subseteq \CoSpan{\;}{\;}$}
      \hstep{\Comp{A/Map, B/CoMap}}{Hyp: $\Comp{S_1/Map, B/Map} = \Map{A}$}
    \end{hcalculation}
  \end{enumerate}
\end{proof}

\begin{remark}
Notice that although the statements in the previous theorem
are in terms of existential propositions,
the proofs guarantee that the quantified maps
on both sides of the ``if and only if'' are the same.
\end{remark}

Recall Definition~\ref{def:inverse} about left and right-inverses.
From Proposition~\ref{prop:s_block}, we can derive some properties about a map and its inverse.
\begin{proposition}
\label{thm:prop_linverse}
If a linear map $\Map{A}$ has a left-inverse $\Map{B}$, the following hold for all linear relations $\Rel{R}, \Rel{S}$.
\begin{enumerate}
    \item[(i)] \Map{A} is INJ,
    \item[(ii)] $\Map{A} \subseteq \CoMap{B}$,
    \item[(iii)] $\Comp{R/Rel, A/Map} = \Rel{S} \implies \Rel{R} = \Comp{S/Rel, B/Map}$.
    \item[(iv)] $\Rel{R} = \Rel{S} \iff \Comp{R/Rel, A/Map} = \Comp{S/Rel, A/Map}$.
\end{enumerate}
\end{proposition}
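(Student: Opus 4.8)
The plan is to treat all four items as short consequences of the single defining equation $\Comp{A/Map, B/Map} = \Id$, together with the monotonicity of relational composition with respect to $\subseteq$ and the fact that every map—in particular $\Map{B}$—is total and deterministic. Item (i) requires no new work: it is precisely the worked example given earlier in the text, which derives $\Kernel{A} \subseteq \Zero$ from the hypothesis and thereby shows that $\Map{A}$ is injective, so I would simply point back to it.

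For items (iii) and (iv) I would argue purely equationally, the idea being that $\Map{B}$ cancels $\Map{A}$ from the right. For (iii), assuming $\Comp{R/Rel, A/Map} = \Rel{S}$, I substitute this into $\Comp{S/Rel, B/Map}$ to obtain $\Comp{R/Rel, A/Map, B/Map}$ and then collapse the tail $\Comp{A/Map, B/Map} = \Id$, leaving $\Rel{R}$; reading the chain backwards gives $\Rel{R} = \Comp{S/Rel, B/Map}$. For (iv), the forward implication is just congruence of composition, while the reverse implication is the ``sandwich'' computation $\Rel{R} = \Comp{R/Rel, A/Map, B/Map} = \Comp{S/Rel, A/Map, B/Map} = \Rel{S}$, where the two outer equalities insert and remove $\Id = \Comp{A/Map, B/Map}$ and the middle equality is the hypothesis $\Comp{R/Rel, A/Map} = \Comp{S/Rel, A/Map}$. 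Each is a one-line diagram manipulation.

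The only item that is not a pure rewriting by the defining equation is (ii), and this is where I expect the (mild) subtlety to lie: it asks for an inclusion $\Map{A} \subseteq \CoMap{B}$ rather than an equality, so collapsing $\Comp{A/Map, B/Map} = \Id$ alone does not suffice. The trick is to first enlarge $\Map{A}$ using the totality of $\Map{B}$ and only afterwards collapse the left-inverse equation:
\begin{hcalculation}[\subseteq]{\Map{A}}
  \hstep{\Comp{A/Map, B/Map, B/CoMap}}{Fig.\ref{fig:typerelations}-TOT: $\Id \subseteq \CoSpan{\;}{\;}$}
  \hstep[=]{\CoMap{B}}{Hyp: $\Comp{A/Map, B/Map} = \Id$}
\end{hcalculation}
Here the first step pads on the right with $\Id \subseteq \Comp{B/Map, B/CoMap}$ (valid because $\Map{B}$, being a map, is total) and the second collapses $\Comp{A/Map, B/Map} = \Id$. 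Alternatively, (ii) could be extracted from Proposition~\ref{prop:s_block}(i), but this direct padding argument is shorter and self-contained.

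Overall I anticipate no genuine obstacle, since the heart of the proposition is simply that $\Map{B}$ cancels $\Map{A}$ from the right. The only points demanding care are the switch from equality to inequality in (ii), which I handle by inserting the totality inequality $\Id \subseteq \Comp{B/Map, B/CoMap}$ before cancelling, and—in the reverse direction of (iv)—keeping the roles of $\Rel{R}$ and $\Rel{S}$ straight when invoking the hypothesis in the middle of the sandwich.
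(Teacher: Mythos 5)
Your proposal is correct and follows essentially the same route as the paper: item (i) is the worked example from the introduction, items (iii) and (iv) are the same insert-and-cancel computations with $\Comp{A/Map, B/Map} = \Id$, and your direct padding argument for (ii) via $\Id \subseteq \Comp{B/Map, B/CoMap}$ is exactly the content of the backward direction of Proposition~\ref{prop:s_block}, which is what the paper cites there instead of inlining.
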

\begin{proof} \quad

(i)
\begin{hcalculation}[\subseteq]{\Kernel{A}}
  \hstep{\CompTwo{White, CoMap, B, CoMap, A, .}}{Fig.\ref{fig:Minimum_and_maximum}: $\Zero \subseteq \Kernel{}$}
  \hstep[=]{\Zero}{Hyp: $\Comp{A/Map, B/Map} = \Id$}
\end{hcalculation}

(ii) Proposition~\ref{prop:s_block} by taking $\Map{S} \coloneq \Map{B}$.

(iii) Here,
\begin{hcalculation}{\Rel{R}}
\hstep{\Comp{R/Rel, A/Map, B/Map}}{Hyp: $\Comp{A/Map, B/Map} = \Id$}
\hstep{\Comp{S/Rel, B/Map}}{Hyp: $\Comp{R/Rel, A/Map} = \Rel{S}$}
\end{hcalculation}

(iv) The $(\Rightarrow)$ direction is straightforward,
\begin{hcalculation}[\implies]{\Rel{R} = \Rel{S}}
\hstep{\Comp{R/Rel, A/Map} = \Comp{S/Rel, A/Map}}{Compose with $\Map{A}$}
\end{hcalculation}

For $(\Leftarrow)$,
\begin{hcalculation}[\implies]{\Comp{R/Rel, A/Map} = \Comp{S/Rel, A/Map}}
\hstep{\Comp{R/Rel, A/Map, B/Map} = \Comp{S/Rel, A/Map, B/Map}}{Compose with $\Map{B}$}
\hstep{\Rel{R} = \Rel{S}}{Hyp: $\Comp{A/Map, B/Map} = \Id$}
\end{hcalculation}

The $(\Rightarrow)$ direction is relation composition.
\end{proof}

Analogously, right-inverses satisfy a dual version of this result.
\begin{proposition}
\label{thm:prop_rinverse}
If a linear map $\Map{A}$ has a right-inverse $\Map{B}$, the following hold for all linear relations $\Rel{R}, \Rel{S}$.
\begin{enumerate}
    \item[(i)] \Map{A} is SUR,
    \item[(ii)] $\Map{A} \supseteq \CoMap{B}$,
    \item[(iii)] $\Comp{A/Map, R/Rel} = \Rel{S} \implies \Rel{R} = \Comp{B/Map, S/Rel}$.
    \item[(iv)] $\Rel{R} = \Rel{S} \iff \Comp{A/Map, R/Rel} = \Comp{A/Map, S/Rel}$.
\end{enumerate}
\end{proposition}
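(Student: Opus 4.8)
The plan is to obtain this statement as the exact dual of Proposition~\ref{thm:prop_linverse} under the two symmetries recorded in Figure~\ref{fig:Symmetry}. Applying \emph{both} at once---sending every relation to its opposite and simultaneously swapping all colors---transposes each diagram, and the transpose of a linear map is again a linear map. Under this transformation the left-inverse equation $A\,;\,B = I$ becomes the right-inverse equation $B\,;\,A = I$; the property INJ is exchanged with SUR; the inclusion $A \subseteq B^{op}$ reverses to $A \supseteq B^{op}$; and each composite $R\,;\,A$ and $S\,;\,B$ has its order flipped to $A\,;\,R$ and $B\,;\,S$. Since the combined symmetry is a bijection on relations, the quantifier ``for all $R,S$'' survives after renaming $R,S$ by their transposes, so Proposition~\ref{thm:prop_linverse} is carried term for term onto the present four claims.

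Because that combined symmetry is easy to misapply, I would confirm the result by rerunning the four arguments directly in mirrored form. For (i), rather than bounding $\ker(A)$ above by the zero subspace I would bound $\im(A)$ below by the full space, inserting $B\,;\,A = I$ into the identity and using monotonicity of composition together with the type inequalities of Figure~\ref{fig:typerelations}; this is the only nontrivial inclusion needed for SUR. Item (ii) is the mirrored clause of Proposition~\ref{prop:s_block} with the witness map taken to be $B$ itself, since $B\,;\,A = I$ is exactly the instance that yields $A \supseteq B^{op}$. Items (iii) and (iv) are two-line calculations dual to the left-inverse ones: for (iii), compose $A\,;\,R = S$ on the \emph{left} with $B$ and collapse $B\,;\,A$ to the identity to get $R = B\,;\,S$; for (iv), the forward direction composes with $A$, while the backward direction composes $A\,;\,R = A\,;\,S$ on the left with $B$ and again uses $B\,;\,A = I$.

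The main obstacle here is bookkeeping rather than substance. As the footnote to Definition~\ref{def:inverse} stresses, the ``left''/``right''-inverse terminology is reversed relative to the diagrammatic side on which the witness sits, and relational composition runs in the opposite order to matrix multiplication, so the genuine risk is reversing an inclusion or composing on the wrong side. The one conceptually delicate point is checking that the combined opposite-and-color-swap actually preserves the hypothesis that $A$ is a map: neither symmetry does so on its own, and it is only their composite that keeps us within the class of linear maps. Once that is settled, each step is forced and the proof mirrors Proposition~\ref{thm:prop_linverse} line for line.
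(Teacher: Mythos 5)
Your proposal is correct and matches the paper's own treatment: the paper gives no explicit proof, stating only that right-inverses satisfy the ``dual version'' of Proposition~\ref{thm:prop_linverse}, which is precisely the mirror-plus-color-swap argument you spell out (and your direct mirrored calculations for (i)--(iv) reproduce the left-inverse proofs line for line). Your observation that only the \emph{composite} of the two symmetries preserves the class of maps, and that the color swap is what reverses the inclusion in (ii), is exactly the bookkeeping the paper leaves implicit.
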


By joining these two results, we further establish properties for maps with inverses.
\begin{proposition}
\label{thm:prop_inverse}
If a linear map $\Map{A}$ has an inverse $\Map{B}$, the following hold for all linear relations $\Rel{R}, \Rel{S}$.
\begin{enumerate}
    \item[(i)] \Map{A} is bijective,
    \item[(ii)] $\Map{A} = \CoMap{B}$,
    \item[(iii)] $\Comp{R/Rel, A/Map} = \Rel{S} \iff \Rel{R} = \Comp{S/Rel, B/Map}$.
    \item[(iv)] $\tikzfig{thm_cospan_dict/inv_cancellation_law}$
\end{enumerate}
\end{proposition}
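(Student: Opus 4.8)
The proof will combine the left- and right-inverse results already established, resting on the key fact that an inverse is \emph{symmetric}: since $\Map{B}$ is an inverse of $\Map{A}$, it is simultaneously a left-inverse and a right-inverse of $\Map{A}$, and moreover $\Map{A}$ is an inverse of $\Map{B}$. The plan is therefore to read off each of the four items by pairing the corresponding halves of Propositions~\ref{thm:prop_linverse} and~\ref{thm:prop_rinverse}, with no genuinely new calculation required.

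For item (i), I would invoke Proposition~\ref{thm:prop_linverse}(i) with $\Map{B}$ viewed as a left-inverse to conclude that $\Map{A}$ is INJ, and Proposition~\ref{thm:prop_rinverse}(i) with $\Map{B}$ viewed as a right-inverse to conclude that $\Map{A}$ is SUR. Since $\Map{A}$ is already a map, it is TOT and DET by Proposition~\ref{def:fundamental_properties_and_invertibility}; hence all four fundamental properties hold and $\Map{A}$ is bijective. For item (ii), Proposition~\ref{thm:prop_linverse}(ii) gives $\Map{A} \subseteq \CoMap{B}$ and Proposition~\ref{thm:prop_rinverse}(ii) gives $\Map{A} \supseteq \CoMap{B}$; antisymmetry of the inclusion poset then yields the equality $\Map{A} = \CoMap{B}$.

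For item (iii), the forward implication $\Comp{R/Rel, A/Map} = \Rel{S} \Rightarrow \Rel{R} = \Comp{S/Rel, B/Map}$ is precisely Proposition~\ref{thm:prop_linverse}(iii). For the converse, I would exploit the symmetry of inverses: since $\Map{A}$ is itself an inverse of $\Map{B}$, applying Proposition~\ref{thm:prop_linverse}(iii) with the roles of $\Map{A}$ and $\Map{B}$ exchanged turns the hypothesis $\Rel{R} = \Comp{S/Rel, B/Map}$ into the conclusion $\Comp{R/Rel, A/Map} = \Rel{S}$. Equivalently, one can compose $\Rel{R} = \Comp{S/Rel, B/Map}$ on the right with $\Map{A}$ and cancel using $\Comp{B/Map, A/Map} = \Id$.

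Item (iv) is a cancellation law, which I would derive either by combining item (iii) with its opposite/color-swapped form via the symmetries discussed in Section~\ref{sec:graphical_notation}, or directly from the two defining identities $\Comp{A/Map, B/Map} = \Id = \Comp{B/Map, A/Map}$ together with the equality $\Map{A} = \CoMap{B}$ from item (ii). None of these steps presents a real difficulty, since the entire content beyond the two preceding propositions is the observation—noted just after Definition~\ref{def:inverse}—that an inverse is symmetric and unique. The closest thing to an obstacle is the biconditional in (iii), where both directions must be supplied and the symmetry argument made explicit; handling it carefully is the only nontrivial bookkeeping, and it remains routine.
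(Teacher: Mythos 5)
Your proposal is correct and follows essentially the same route as the paper, which states Proposition~\ref{thm:prop_inverse} as an immediate consequence of ``joining'' Propositions~\ref{thm:prop_linverse} and~\ref{thm:prop_rinverse} and leaves the details implicit; you simply spell out that pairing (INJ$+$SUR$+$map gives bijectivity, the two inclusions give equality by antisymmetry, and the converse in (iii) comes from the symmetry of inverses). No gap.
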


The next step is to connect Propositions~\ref{prop:r_block} and~\ref{prop:s_block}
through a canonical decomposition.
The usual canonical decomposition of linear maps is not sufficient in this case.
Thus, the need arises for a relational version of Equation~\eqref{eq:functional_canonical_decomposition}.
The goal of the next section is to construct such a decomposition.

\subsection{Decomposing Linear Relations}
\label{sec:decomposition_relations}

Gaussian elimination is among the most widely used tools in linear algebra.
Although it is most commonly associated with the LU matrix decomposition, Gaussian elimination also appears in other matrix factorization methods.
Particularly useful for linear relations is the \emph{Canonical Decomposition} from Equation~\eqref{eq:functional_canonical_decomposition},
which factorizes any linear map $\Map{A}$ in terms of matrices with inverses, linked by $r$ simple wires, where $r$ is the rank of $\Map{A}$.

\begin{theorem}[Canonical Decomposition]\label{thm:gaussian_elimination}
    For any linear map $\TypedMap{A}{m}{n}$ there are linear maps $\TypedInv{P}{m}{m}, \TypedInv{Q}{n}{n}$ with inverses $\Inv{P^{-1}}, \Inv{Q^{-1}}$ and a number $r \in \mathbb{N}$ such that
    \[\TypedMap{A}{m}{n} = \GaussianElim.\]
\end{theorem}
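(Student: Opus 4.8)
The plan is to prove this by Gaussian elimination, reducing $A$ to its rank normal form through invertible row and column operations and then collecting those operations into the two invertible factors. I would set $r := \operatorname{rank}(A)$ and argue by induction on the dimensions $m$ and $n$. In the \textbf{base case}, if $A$ is the zero map, take $r = 0$, let $P$ and $Q$ be identities, and let the middle factor be the zero matrix; the equality is immediate.

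For the \textbf{inductive step}, assume $A \neq 0$. Pick a nonzero entry and move it to position $(1,1)$ by a row swap and a column swap, each realized by an invertible permutation matrix, and then rescale it to $1$ by an invertible diagonal matrix. Using this leading $1$ as a pivot, subtract multiples of the first row from the remaining rows to clear the rest of the first column (left multiplication by invertible elementary matrices), and subtract multiples of the first column from the remaining columns to clear the rest of the first row (right multiplication by invertible elementary matrices). Up to these left and right invertible multiplications, $A$ is brought to the block form $\begin{bmatrix} 1 & 0 \\ 0 & A' \end{bmatrix}$ with $A'$ of size $(m-1)\by(n-1)$. Applying the induction hypothesis to $A'$ yields invertible $P', Q'$ and a rank $r'$, and I would assemble the full $P, Q$ by embedding $P', Q'$ block-diagonally below a leading $1$; the rank is then $r = r' + 1$.

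The reason this produces the claimed form is that every elementary operation---swap, scaling, and row or column addition---is an invertible matrix, and a product of invertible matrices is again invertible, its inverse being the reversed product of the individual inverses (cf.\ the discussion following Definition~\ref{def:inverse}, where a two-sided inverse is shown to exist and be unique). Collecting all the left operations into a single invertible $P$ and all the right operations into a single invertible $Q^{-1}$ gives $P A Q = \begin{bmatrix} I_r & 0 \\ 0 & 0 \end{bmatrix}$, whence $A = P^{-1}\begin{bmatrix} I_r & 0 \\ 0 & 0 \end{bmatrix}Q^{-1}$. In the diagrammatic syntax the middle factor is exactly $r$ bare wires, with discards routing away the $m-r$ non-pivot inputs and zeros supplying the $n-r$ non-pivot outputs, so that the blue invertible boxes on either side carry the change of basis.

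The hard part will \emph{not} be any deep idea---this is the classical rank factorization---but rather the bookkeeping. The two things requiring care are, first, verifying that the accumulated transformations genuinely assemble into maps $P$ and $Q$ admitting two-sided inverses, so that representing them by the blue invertible boxes is justified; and second, checking that the block-diagonal assembly in the inductive step preserves invertibility and yields precisely the $I_r$-then-zeros pattern rather than a permuted variant, which is arranged by absorbing any leftover permutation into $P$ and $Q$. Rendering each elementary operation faithfully as a diagram rewrite is routine but tedious, which is why in practice one states the result at the level of matrices and transports it to diagrams wholesale.
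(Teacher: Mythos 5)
Your proof is correct and takes essentially the same route as the paper: both run Gaussian elimination to reach the rank normal form $A = P^{-1}\begin{bsmallmatrix} I_r & 0 \\ 0 & 0 \end{bsmallmatrix}Q^{-1}$ and then read the middle factor off as $r$ bare wires with discards and zeros. The only difference is presentational --- the paper cites the reduced row-echelon form as known and does the remaining bookkeeping (absorbing the permutation and leftover column structure into $P$ and $Q$) by diagram rewrites, whereas you carry out the full two-sided elimination explicitly by induction on the dimensions.
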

\begin{proof}
When transforming a matrix $\Map{A}$ into reduced row-echelon form (RREF) we obtain
\begin{equation}\label{eq:RREF}
    \Map{A} = \tikzfig{rcd_for_cospan/cospan/canonicaldecomposition/step1}
\end{equation}
where $\Inv{Q^{-1}}$ is an elementary matrix that has inverse and $\Inv{P_{er}}$ is a permutation matrix.
By rearranging the diagrams, we get

\begin{hcalculation}[=]{\Map{A}}
\hstep{\tikzfig{rcd_for_cospan/cospan/canonicaldecomposition/step1}}{Eq.\eqref{eq:RREF}: \Map{A} = \tikzfig{rcd_for_cospan/cospan/canonicaldecomposition/step1}}
\hstep{\tikzfig{rcd_for_cospan/cospan/canonicaldecomposition/step2}}{Fig.\ref{fig:Commutative_Comonoid}: \tikzfig{theorems/unit}}
\hstep{\tikzfig{rcd_for_cospan/cospan/canonicaldecomposition/step3}}{\tikzfig{rcd_for_cospan/cospan/canonicaldecomposition/defP}}
\hstep{\tikzfig{rcd_for_cospan/cospan/canonicaldecomposition/step4}}{Prop.\ref{thm:prop_inverse}: $\Inv{P}$ and $\Inv{Q}$ have inverses}
\end{hcalculation}
\end{proof}

The previous theorem applies only to a linear map.
To generalize it to any linear relation,
it is also necessary to reveal the information for totality and determinism.
This gives rise to a decomposition with additional associated numbers besides $r$.

This decomposition is the main tool developed in this paper and,
similarly to the Canonical Decomposition or Gaussian elimination for matrices,
is what allows us to explicitly construct solutions to relational linear equations.
Also, notice that this decomposition was already independently developed in~\cite[Lemma~48]{booth2024completeequationaltheoriesclassical}.

\begin{theorem}[Cospan Decomposition]\label{thm:rcd_for_cospan}
For all linear relations $\Rel{R}$ there are linear maps $\TypedInv{P}{m}{m}, \TypedInv{Q}{n}{n}$ with inverses $\Inv{P^{-1}}$ and $\Inv{Q^{-1}}$ and numbers $k_I, k_S, k_T, k_D, r \in \mathbb{N}$ such that
\[ \Rel{R} = \GRR.\]
\end{theorem}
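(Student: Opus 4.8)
The plan is to reduce $R$ to cospan form and then simultaneously reduce its two legs. By Proposition~\ref{prop:cospan} I may write $R = \{(x,y) \mid Ax = By\}$ for maps $A \colon \K^m \to \K^k$ and $B \colon \K^n \to \K^k$. The first observation is that the five numbers are forced by $R$: unwinding Definition~\ref{def:fundamental_properties_and_invertibility} gives the dictionary INJ $\iff \ker A = 0$, DET $\iff \ker B = 0$, TOT $\iff \im A \subseteq \im B$, and SUR $\iff \im B \subseteq \im A$. Accordingly I set $k_I = \dim \ker A$, $k_D = \dim \ker B$, $r = \dim(\im A \cap \im B)$, $k_T = \dim \im A - r$, and $k_S = \dim \im B - r$. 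A dimension count then yields $m = k_I + r + k_T$ and $n = k_D + r + k_S$, matching the wire counts on the two sides of the target diagram, together with $\dim R = k_I + r + k_D$.

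Next I build adapted bases. Since replacing $\K^k$ by $\im A + \im B$ leaves the set $\{Ax = By\}$ unchanged, I may assume $k = r + k_T + k_S$ and fix a basis of $\K^k$ refining the flag $\im A \cap \im B \subseteq \im A,\, \im B \subseteq \K^k$: a block spanning $\im A \cap \im B$ (size $r$), a block completing it inside $\im A$ (size $k_T$), and a block completing it inside $\im B$ (size $k_S$). Pulling this back, I pick a basis of $\K^m$ consisting of $\ker A$ (size $k_I$), vectors mapping isomorphically onto the $\im A \cap \im B$ block (size $r$), and vectors mapping onto the $\im A$-complement block (size $k_T$); symmetrically for $\K^n$ using $\ker B$, the $r$-block, and the $\im B$-complement block (size $k_S$). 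In these bases $A$ and $B$ become
\[ A = \begin{bmatrix} 0 & I_r & 0 \\ 0 & 0 & I_{k_T} \\ 0 & 0 & 0 \end{bmatrix}, \qquad B = \begin{bmatrix} 0 & I_r & 0 \\ 0 & 0 & 0 \\ 0 & 0 & I_{k_S} \end{bmatrix}, \]
and the change-of-basis isomorphisms of $\K^m$ and $\K^n$ are exactly the invertible maps $P$ and $Q$ of the statement. The basis change on $\K^k$ acts on both legs and so cancels in $\{Ax=By\}$; it never appears in the final diagram.

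Finally I read off $R$ in the new coordinates. Writing $x = (x_I, x_r, x_T)$ and $y = (y_D, y_r, y_S)$, the equation $Ax = By$ becomes $x_r = y_r$, $x_T = 0$, and $y_S = 0$, while $x_I$ and $y_D$ are unconstrained. This is precisely the parallel (Cartesian) product of $r$ bare wires (the bijective core), $k_I$ copies of the discard $\K \to \K^0$ (the free $x_I$: an injectivity defect), $k_T$ copies of the opposite of the zero $\K \to \K^0$ (the constraint $x_T = 0$: a totality defect), $k_D$ copies of the opposite of the discard $\K^0 \to \K$ (the free $y_D$: a determinism defect), and $k_S$ copies of the zero $\K^0 \to \K$ (the constraint $y_S = 0$: a surjectivity defect). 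Any reordering of wires is a permutation and is absorbed into $P$ and $Q$; conjugating by $P$ and $Q$ (using $M^{op} = M^{-1}$ for invertible $M$, Proposition~\ref{thm:prop_inverse}) then produces the claimed diagram.

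The crux is the simultaneous reduction in the second step. Applying the single-map Canonical Decomposition of Theorem~\ref{thm:gaussian_elimination} to $A$ and to $B$ independently uses two unrelated bases of the shared space $\K^k$, which generally disagree; the genuine work is producing one basis of $\K^k$ respecting both images at once, i.e.\ the flag $\im A \cap \im B \subseteq \im A,\, \im B$. This is the seed of the explicit linking matrix $H$ constructed later in Theorems~\ref{thm:abstract} and~\ref{thm:rcd_for_pairs_of_matrices}; here only its existence is needed. Diagrammatically, the same reduction can be carried out as a relational Gaussian elimination on the cospan, clearing the free and the zero channels one wire at a time, which is the route of the cited Lemma~48.
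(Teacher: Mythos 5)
Your proof is correct, but it takes a genuinely different route from the paper's. You argue semantically: you name the five invariants upfront as $k_I = \dim\ker A$, $k_D = \dim\ker B$, $r = \dim(\im A \cap \im B)$, $k_T = \dim\im A - r$, $k_S = \dim\im B - r$, construct a basis of $\im A + \im B$ adapted to the flag $\im A \cap \im B \subseteq \im A,\, \im B$, pull it back to bases of $\K^m$ and $\K^n$ through $\ker A$ and $\ker B$, and read the normal form off the resulting block matrices; the changes of basis become $P$ and $Q$, and the shared basis change on $\K^k$ cancels. The paper never constructs this adapted basis explicitly and never leaves the graphical calculus: it applies the Canonical Decomposition (Theorem~\ref{thm:gaussian_elimination}) three times in succession, interleaved with Split/Frobenius/Flip rewrites and Lemma~\ref{lem:break_a}, and the five numbers emerge as differences of the intermediate ranks ($k_D = n - r_1$, $k_T = r_2$, $r = r_3$, $k_S = r_1 - r_3$, $k_I = m - r_2 - r_3$) rather than as dimensions of named subspaces. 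Your closing paragraph correctly identifies both the crux (a single basis of $\K^k$ compatible with both images; decomposing $A$ and $B$ independently will not do) and the alternative ``relational Gaussian elimination, one wire at a time'' route --- which is in fact the paper's own proof. What your version buys is an immediate semantic reading of the five wire counts, which the paper only recovers afterwards in Lemma~\ref{lem:no_lolipop} and Theorem~\ref{thm:cospan_dict}; what the paper's version buys is that the entire argument is a diagram rewrite reusing Theorem~\ref{thm:gaussian_elimination} as a black box, consistent with its stated methodology.
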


\begin{proof}
\begin{hcalculation}[=]{\Rel{R}}
  \hstep{\CoSpan{A}{B}}{Prop.\ref{prop:cospan}:$\Rel{R} = \CoSpan{A}{B}$}
  \hstep{\tikzfig{rcd_for_cospan/step1}}{Thm.\ref{thm:gaussian_elimination}: Canonical Decomposition}
  \hstep{\tikzfig{rcd_for_cospan/step2}}{\tikzfig{rcd_for_cospan/defAtil} and $k_D \coloneq n - r_1$}
  \hstep{\tikzfig{rcd_for_cospan/step3}}{Fig.\ref{fig:Split}: \tikzfig{theorems/splitblack2}}
  \hstep{\tikzfig{rcd_for_cospan/step4}}{Thm.\ref{thm:gaussian_elimination}: Canonical Decomposition}
  \hstep{\tikzfig{rcd_for_cospan/step41}}{Fig.\ref{fig:typerelations}-DET: \tikzfig{theorems/whiteball}}
  \hstep{\tikzfig{rcd_for_cospan/step5}}{Fig.\ref{fig:Frobenius_Algebra}: \tikzfig{theorems/bone-white}}
  \hstep{\tikzfig{rcd_for_cospan/step6}}{Fig.\ref{fig:Flip}: \tikzfig{theorems/flipblack}}
  \hstep{\tikzfig{rcd_for_cospan/step7}}{\tikzfig{rcd_for_cospan/defAtiltil}}
  \hstep{\tikzfig{rcd_for_cospan/step71}}{Fig.\ref{fig:Commutative_Comonoid}: \tikzfig{theorems/unitblack}}
  \hstep{\tikzfig{rcd_for_cospan/step8}}{Fig.\ref{fig:Commutative_Comonoid}: \tikzfig{theorems/whitedestroyer2}  and $k_T \coloneq r_2$}
  \hstep{\tikzfig{rcd_for_cospan/step9}}{Lem.~\ref{lem:break_a}: introduce $\Map{\tilde{\tilde{A}}_{12}}$}
  \hstep{\tikzfig{rcd_for_cospan/step10}}{Thm.\ref{thm:gaussian_elimination}: Canonical Decomposition}
  \hstep{\tikzfig{rcd_for_cospan/step101}}{$k_I \coloneq m-r_2-r_3, k_S \coloneq r_1 - r_3, r \coloneq r_3$}
  \hstep{\tikzfig{rcd_for_cospan/step11}}{Prop.~\ref{thm:prop_inverse}: $\Inv{P_2}, \Inv{P_3}$ have inverses}
    \hstep{\tikzfig{rcd_for_cospan/step112}}{\tikzfig{rcd_for_cospan/defPtil}}
    \hstep{\tikzfig{rcd_for_cospan/step113}}{Prop.~\ref{thm:prop_inverse}: $\Inv{P_1}$ has inverse}
  \hstep{\tikzfig{rcd_for_cospan/step12}}{\tikzfig{rcd_for_cospan/defQtil}}
  \hstep{\tikzfig{rcd_for_cospan/step13}}{\hyperref[fig:lawsSMC]{SSM}: Rearrange wires}
  \hstep{\GRR}{\tikzfig{rcd_for_cospan/defP} \\ \tikzfig{rcd_for_cospan/defQ}}
\end{hcalculation}
\end{proof}

Notice that whenever $\Rel{R}$ is a map,
this decomposition specializes to the Canonical Decomposition from Theorem~\ref{thm:gaussian_elimination}.
Therefore, we can view it as its relational generalization.

\subsection{Characterization of Fundamental Properties in Cospan Form}
\label{sec:applications_characterization}

With a relational decomposition, we can complete the characterization of fundamental properties mentioned in Section~\ref{sec:fundamental_properties}.
More concretely, the decomposition unveils the equivalence between Propositions~\ref{prop:r_block} and~\ref{prop:s_block}.
We then build a graph in Section~\ref{sec:aplications_poset_of_fundamental_properties} that relates the fundamental properties to each other,
followed by Section~\ref{sec:applications_characterization} on applications, i.e., the Rank Lemma and the connection between the pigeonhole principle and the Exchange Lemma.
This completes the generalization of the Invertible Matrix Theorem.
Section~\ref{sec:applications_invertible_matrix_theorem} shows how the usual IMT is obtained as a particular case.

The next step in the characterization is connecting the abstract notions of injectivity, surjectivity, determinism, and totality to a structural property: the vanishing of wires after decomposing the relation.

\begin{lemma}
\label{lem:no_lolipop}
Let \Rel{R} be decomposed as in Theorem~\ref{thm:rcd_for_cospan}, i.e.,
\[ \Rel{R} = \GRR.\]
The vanishing of central wires determines its fundamental properties.
\begin{center}
\renewcommand{\arraystretch}{2.0}
\begin{tabular}{lcr|lcr}
        \Rel{R} is INJ   & $\iff$ & $k_I = 0$
        & \Rel{R} is SUR & $\iff$ & $k_S = 0$  \\
        \Rel{R} is TOT   & $\iff$ & $k_T = 0$
        & \Rel{R} is DET & $\iff$ & $k_D = 0$
\end{tabular}
\end{center}
\end{lemma}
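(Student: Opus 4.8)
The plan is to reduce the four equivalences to a single computation on the ``core'' of the decomposition and then read off each property generator by generator. Write the decomposition of Theorem~\ref{thm:rcd_for_cospan} as $\Rel{R} = \Inv{P}\,;\,C\,;\,\Inv{Q}$, where $C$ denotes the central block of $\GRR$ carrying the wires counted by $k_I, k_S, k_T, k_D, r$. Since $\Inv{P}$ and $\Inv{Q}$ are invertible, the first step is to show that pre- and post-composition with an invertible map leaves all four fundamental properties unchanged, so that $\Rel{R}$ is INJ (resp.\ SUR, TOT, DET) if and only if $C$ is. Diagrammatically this follows from the cancellation laws of Proposition~\ref{thm:prop_inverse}: an invertible factor adjacent to the discard/zero caps used to define $\RelKernel{R}$, $\RelImage{R}$, $\RelCoImage{R}$ and $\RelCoKernel{R}$ is either absorbed or cancelled, leaving the \emph{same} property-defining equation for $C$ (e.g.\ the factor on the right cancels against the cap and the factor on the left preserves the zero subspace, so $\RelKernel{R} = \Zero \iff \RelKernel{C} = \Zero$).

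The second step is to recognise $C$ as a Cartesian product of elementary relations: $r$ identity wires together with $k_I$, $k_S$, $k_T$, $k_D$ copies of the four one-wire generators, namely the zero and discard generators and their opposites, one type per property. Using the interchange law~\eqref{eq:2d_associativity_in_symbolic_syntax}, the operations $\RelImage{\,\cdot\,}$, $\RelKernel{\,\cdot\,}$, $\RelCoImage{\,\cdot\,}$ and $\RelCoKernel{\,\cdot\,}$ all distribute over the Cartesian product, so each of $C$'s four properties holds if and only if it holds for every factor. It then remains to tabulate the five elementary pieces: the identity wire satisfies all four properties, while each generator fails exactly one (the discard generator fails INJ, its opposite fails DET, the zero generator fails SUR, and its opposite fails TOT), which is immediate from Equation~\eqref{eq:map_equations} and Proposition~\ref{def:fundamental_properties_and_invertibility}.

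Combining the two steps yields each equivalence at once. For instance, $C$ is INJ if and only if $\RelKernel{C}$ is trivial, which by distributivity holds if and only if every factor is INJ; since the only elementary piece failing INJ is the discard generator, this happens exactly when no such generator occurs, i.e.\ when $k_I = 0$ (here we use $k_I \in \mathbb{N}$, so ``no copies'' means $k_I = 0$). The same argument for SUR, TOT and DET gives $k_S = 0$, $k_T = 0$ and $k_D = 0$. In fact, once the INJ case is settled the remaining three follow formally from the opposite and colour-swap symmetries of Figure~\ref{fig:Symmetry}, which permute the four generators—and hence the labels $k_I, k_S, k_T, k_D$—while fixing the $r$ identity wires.

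I expect the main obstacle to be the first step: making precise in the calculus that composition with an invertible map leaves each property unchanged, and in particular keeping track of which cancellation from Proposition~\ref{thm:prop_inverse}(iii)--(iv) applies on which side for each of the four equations. The matching of each central generator to the property it breaks is routine once the elementary table is in place, but some care is needed to ensure the five blocks of $C$ are genuinely independent factors of a Cartesian product rather than being entangled by stray twists, which must first be absorbed into $\Inv{P}$ and $\Inv{Q}$.
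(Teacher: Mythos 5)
Your proof is correct and takes essentially the same route as the paper's: cancel the invertible factors $P$ and $Q$ against the property-defining black/white caps and read each property off the central wires, where each non-identity generator breaks exactly one of INJ, SUR, TOT, DET. The only difference is organizational --- you factor the argument into two general lemmas (invariance of the four fundamental properties under composition with invertible maps, and their distributivity over the Cartesian product) plus a table of the five elementary wire types, whereas the paper runs a single direct diagrammatic calculation per property, ending in the observation that the discard on $k$ wires is contained in the opposite of the zero on $k$ wires if and only if $k=0$.
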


\begin{proof} We prove the TOT case. The other three cases are proven similarly.
\begin{hcalculation}[\iff]{\Discard \subseteq \Diagram{Rel}{R}{.}{Black}}
\hstep{\tikzfig{no_lolipop/step2}}{Thm.~\ref{thm:rcd_for_cospan}: Cospan Decomposition}
\hstep{\tikzfig{no_lolipop/step3}}{Fig.\ref{fig:typerelations}-SUR: \tikzfig{theorems/blackballinv}}
\hstep{\tikzfig{no_lolipop/step4}}{Fig.\ref{fig:Frobenius_Algebra}: \tikzfig{theorems/bone} $+$ \ref{fig:Bialgebra}: \tikzfig{theorems/bonecolor}}
\hstep{\tikzfig{no_lolipop/step5}}{Prop.~\ref{thm:prop_inverse}: $\Inv{P}$ has inverse}
\hstep{\tikzfig{no_lolipop/step6}}{Fig.\ref{fig:typerelations}-SUR: \tikzfig{theorems/blackballinv}}
\hstep{\TypedDiscard{k_3} \subseteq \TypedCoZero{k_3}}{Remove \Discard from both sides}
\hstep{k_T = 0}{Fig.\ref{fig:Inequality}: $\TypedDiscard{n} \subseteq\TypedCoZero{n} \iff n=0$}
\end{hcalculation}
\end{proof}

Now we are ready to prove that all statements in Lemma~\ref{lem:no_lolipop}
and Propositions~\ref{prop:r_block} and~\ref{prop:s_block} are equivalent.
This is a generalization of the Invertible Matrix Theorem
giving a series of alternative characterization for each fundamental property.

\begin{theorem}[Characterization of fundamental properties]\label{thm:cospan_dict}
  Let \CoSpan{A}{B} be decomposed as in Theorem~\ref{thm:rcd_for_cospan}, i.e.,
  \[ \CoSpan{A}{B} = \GRR.\]
  Each column of the following table is a series of equivalent statements.
  \begin{center}
    \resizebox{\textwidth}{!}{\tikzfig{thm_cospan_dict/stat_1}}
  \end{center}
\end{theorem}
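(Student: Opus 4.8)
The plan is to treat this theorem as a pure gluing result, since the substantive work has already been spread across the preliminary lemmas. For each of the four columns I would establish a single cycle of implications linking three clusters of statements: the fundamental property together with its \emph{R-block} reformulations, the vanishing of the corresponding central index, and the \emph{S-block} existence statement. The within-cluster equivalences are already available — Proposition~\ref{prop:r_block} collapses the R-block to one equivalence class, and Proposition~\ref{prop:s_block} does the same for the S-block — so only two bridges per column remain. By the color-swap and opposite dualities recorded in Section~\ref{sec:graphical_notation}, it suffices to build these bridges once, say for the TOT column, and then read off the DET, INJ, and SUR columns by mirroring.

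The first bridge is handed to us by Lemma~\ref{lem:no_lolipop}, which equates each property with the vanishing of one of $k_I, k_S, k_T, k_D$; in the TOT case, $\Rel{R}$ is total iff $k_T = 0$. The second bridge connects the S-block to the rest, and one direction of it is cheap. For TOT the S-block witness is a map $\Map{S_1}$ with $\Comp{S_1/Map, B/Map} = \Map{A}$, i.e. a factorization of $A$ through $B$; its mere existence immediately yields $\im A \subseteq \im B$, the R-block totality condition, by the same kind of short diagram chase as the left-inverse-implies-injective argument in Section~\ref{sec:graphical_notation} (under the duality, the easy direction for the INJ and SUR columns is exactly the injective- and surjective-from-inverse statements of Propositions~\ref{thm:prop_linverse} and~\ref{thm:prop_rinverse}).

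The main obstacle is the converse: assuming the property holds — equivalently, that the relevant central index has vanished — I must actually \emph{construct} the S-block witness. This is the step that, in the classical IMT, forces Gaussian elimination, and here it is powered entirely by the Cospan Decomposition of Theorem~\ref{thm:rcd_for_cospan}. When $k_T = 0$ the decomposition loses the offending wire, while its outer factors $\Inv{P}$ and $\Inv{Q}$ remain invertible by construction. I would absorb those invertible factors using Proposition~\ref{thm:prop_inverse}, reducing the cospan to a normal form consisting only of identity and zero wires from which the witness map can be read off explicitly in terms of $P^{-1}$ and $Q^{-1}$. Checking that this candidate satisfies the required equation or containment is then a mechanical simplification; the only real care is bookkeeping, namely tracking which of the indices $k_I, k_S, k_T, k_D, r$ survive and matching the surviving wires against the two sides of the S-block statement.

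Assembling the bridges gives, for the TOT column, the cycle: property $\Rightarrow$ $k_T = 0$ (Lemma~\ref{lem:no_lolipop}) $\Rightarrow$ existence of $\Map{S_1}$ (decomposition) $\Rightarrow$ property (easy direction), after which Propositions~\ref{prop:r_block} and~\ref{prop:s_block} fold in the remaining reformulations. Applying the color-swap and opposite dualities then delivers the other three columns with no further computation.
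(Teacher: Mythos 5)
Your proposal follows essentially the same route as the paper: Propositions~\ref{prop:r_block} and~\ref{prop:s_block} collapse the two blocks internally, Lemma~\ref{lem:no_lolipop} supplies the bridge from the property to the vanishing of the central index, the Cospan Decomposition of Theorem~\ref{thm:rcd_for_cospan} with the offending wire removed (plus Proposition~\ref{thm:prop_inverse} to absorb $P$ and $Q$) explicitly constructs the S-block witness, and a one-line diagram chase closes the cycle. The only caveat is that the paper handles the remaining three columns by redoing the analogous computation rather than by literally invoking the color-swap and mirror symmetries, since those symmetries also transform the fixed maps $A$ and $B$; this is a cosmetic difference.
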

\begin{proof}
  We only prove the TOT case (second column), the other can be proven similarly.
  The equivalence between rows~1,~5, and~6
  is the subject of Proposition~\ref{prop:r_block},
  whereas rows~3 and~4 are equivalent by Proposition~\ref{prop:s_block}.
  Therefore, it only remains to show that rows~1,~2 and~3 are equivalent.
  The strategy is Row 1 $\Rightarrow$ Row 2 $\Rightarrow$ Row 3 $\Rightarrow$ Row 1.

  (Row 1 $\implies$ Row 2) Lemma~\ref{lem:no_lolipop}.

  (Row 2 $\implies$ Row 3)
    \begin{hcalculation}[=]{\CoSpan{A}{B}}
      \hstep{\GRR}{Thm.~\ref{thm:rcd_for_cospan}: Cospan Decomposition}
      \hstep{\tikzfig{thm_cospan_dict/2_implies_3_step_3}}{Hyp: $k_T = 0$}
      \hstep[\supseteq]{\tikzfig{thm_cospan_dict/2_implies_3_step_4}}{Fig.\ref{fig:Inequality}: $\Zero \subseteq \CoDiscard$}
      \hstep[\supseteq]{\tikzfig{thm_cospan_dict/2_implies_3_step_5}}{Prop.~\ref{thm:prop_inverse}: $\Inv{Q}$ has inverse}
      \hstep{\Map{S}}{$\Map{S} \coloneq
      \tikzfig{thm_cospan_dict/2_implies_3_step_4}$}
    \end{hcalculation}

  (Row 3 $\implies$ Row 1)
    \begin{hcalculation}[\supseteq]{\CompTwo{., Map, A, CoMap, B, Black}}
      \hstep{\Diagram{Map}{S}{.}{Black}}{Hyp: $\CoSpan{A}{B} \supseteq \Map{S}$}
      \hstep[=]{\Discard}{Fig.\ref{fig:typerelations}-TOT: $\Diagram{Map}{}{.}{Black} = \Discard$}
    \end{hcalculation}
\end{proof}

\begin{remark}
  Theorem~\ref{prop:s_block} only proved existence results,
  without a procedure to construct the maps $\Map{S}$ relating the cospan terms.
  In contrast to that, the previous theorem uses the relational decomposition
  to explicitly construct the maps.
\end{remark}

\subsection{Graph of Fundamental Properties}
\label{sec:aplications_poset_of_fundamental_properties}

We now discuss how the fundamental properties relate to each other.
This result can also be visualized as the graph in Figure~\ref{fig:poset_of_fundamental_properties}.

\begin{figure}[thb]
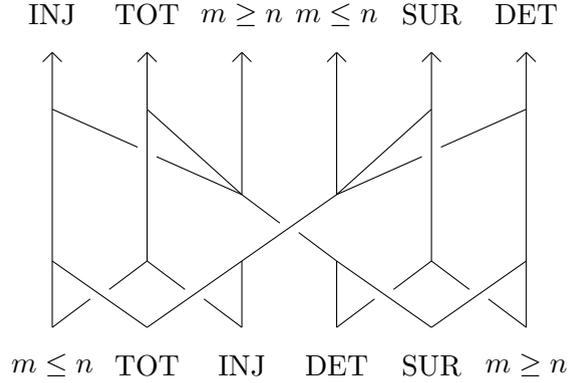

  \centering
  \tikzfig{poset_of_fundamental_properties/poset}
  \caption{Graph of fundamental properties.
   The way to read the graph is as follows: given a node $Y$ at the top, let $X_1, X_2, ..., X_k$ be all the nodes at the bottom you can reach by following the paths starting at $Y$. Then, it holds that $X_1, ..., X_k \implies Y$. For example, if we start at the node $(m \leq n)$ at the top and follow the edges downwards, we reach the nodes TOT and INJ at the bottom, so it holds that TOT + INJ $\implies m \leq n$, which is statement (v) of Theorem~\ref{thm:poset_of_fundamental_properties}.}
  \label{fig:poset_of_fundamental_properties}
\end{figure}

\begin{theorem}
  \label{thm:poset_of_fundamental_properties}
  For a relation $\Rel{R}$, the following hold\newline
\begin{tabular}{rll}
(i)   & $R$ is TOT, INJ, SUR and $m \geq n$ & $\implies R$ is DET; \\
(ii)  & $R$ is DET, INJ, SUR and $m \leq n$ & $\implies R$ is TOT; \\
(iii) & $R$ is DET, TOT, INJ and $m \geq n$ & $\implies R$ is SUR; \\
(iv)  & $R$ is DET, TOT, SUR and $m \leq n$ & $\implies R$ is INJ; \\
(v)   & $R$ is TOT and INJ                  & $\implies m \leq n$; \\
(vi)  & $R$ is DET and SUR                  & $\implies m \geq n$.
\end{tabular}
\end{theorem}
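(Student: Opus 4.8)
The plan is to reduce all six statements to elementary arithmetic over the five natural numbers $k_I, k_S, k_T, k_D, r$ produced by the Cospan Decomposition of Theorem~\ref{thm:rcd_for_cospan}. The crucial first step is to extract two \emph{counting identities} from the normal form. Since the maps $P$ and $Q$ are bijective of types $m$-by-$m$ and $n$-by-$n$, they preserve the number of wires on each boundary, so the central block of the decomposition presents exactly $m$ wires on its left interface and $n$ on its right. Reading off how these bundle, the left wires split into groups of sizes $k_I$, $k_T$, and $r$, while the right wires split into groups of sizes $k_D$, $k_S$, and $r$, yielding
\[ m = k_I + k_T + r, \qquad n = k_D + k_S + r. \]
Equivalently, these two identities are immediate from the assignments $k_D = n - r_1$, $k_T = r_2$, $k_I = m - r_2 - r_3$, $k_S = r_1 - r_3$, and $r = r_3$ recorded in the proof of Theorem~\ref{thm:rcd_for_cospan}, since $k_I + k_T + r = m$ and $k_D + k_S + r = n$ cancel telescopically.

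Next I would invoke Lemma~\ref{lem:no_lolipop} as a dictionary translating each hypothesis and each conclusion into the vanishing of one of these numbers: INJ $\iff k_I = 0$, SUR $\iff k_S = 0$, TOT $\iff k_T = 0$, and DET $\iff k_D = 0$. With this dictionary every item becomes purely numerical, and the non-negativity of the $k$'s is what turns each dimensional inequality into a forced equality. For instance, in statement (v) the hypotheses TOT and INJ give $k_T = k_I = 0$, so the first identity collapses to $m = r$; since $n = k_D + k_S + r \geq r = m$, we conclude $m \leq n$. Statement (vi) is the exact mirror image, using $k_D = k_S = 0$ together with the second identity to obtain $n = r \leq m$.

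The four completion statements (i)--(iv) follow the same recipe. Taking (i) as the representative case: TOT, INJ, and SUR force $k_T = k_I = k_S = 0$, so the identities reduce to $m = r$ and $n = k_D + r$; the dimensional hypothesis $m \geq n$ then reads $r \geq k_D + r$, i.e. $k_D \leq 0$, and since $k_D \in \mathbb{N}$ this forces $k_D = 0$, which by Lemma~\ref{lem:no_lolipop} is precisely DET. Items (ii), (iii), and (iv) are obtained by permuting the roles of the four numbers in the obvious way. The only genuine content lies in the pair of counting identities of the first step; once they are established, the whole theorem is bookkeeping over $\mathbb{N}$. The main obstacle is therefore correctly reading the two boundary wire counts off the decomposition --- a mis-identified bundling would break the arithmetic --- so I would take care to align each wire group with the defining assignments of $k_I, k_S, k_T, k_D$ from Theorem~\ref{thm:rcd_for_cospan}.
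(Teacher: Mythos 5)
Your proposal is correct and follows essentially the same route as the paper: apply the Cospan Decomposition of Theorem~\ref{thm:rcd_for_cospan}, use Lemma~\ref{lem:no_lolipop} to translate each fundamental property into the vanishing of one of $k_I, k_S, k_T, k_D$, and then exploit the squareness of $P$ and $Q$ to count boundary wires and finish by arithmetic over $\mathbb{N}$. The only (cosmetic) difference is that you state the two counting identities $m = k_I + k_T + r$ and $n = k_D + k_S + r$ once and for all, whereas the paper rederives the relevant instance inside each case.
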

\begin{proof} We prove (i) and (iii). The other arguments are analogous.

(i)
\begin{hcalculation}[=]{\TypedRel{R}{m}{n}}
\hstep{\GRR}{Thm.~\ref{thm:rcd_for_cospan}: Cospan Decomposition}
\hstep{\tikzfig{poset_of_fundamental_properties/step3}}{Lem.~\ref{lem:no_lolipop}: $k_S = k_T = k_D = 0$}
\hstep[\implies]{n + k_I = m}{$\Inv{P}$ and $\Inv{Q}$ are square}
\hstep[\implies]{k_I = 0}{$m \leq n$ and $k_I \geq 0$}
\hstep[\implies]{\Rel{R} \text{ is INJ}}{Lem.~\ref{lem:no_lolipop}}
\end{hcalculation}

(iii)
\begin{hcalculation}[=]{\TypedRel{R}{m}{n}}
\hstep{\GRR}{Thm.~\ref{thm:rcd_for_cospan}: Cospan Decomposition}
\hstep{\tikzfig{poset_of_fundamental_properties/step8}}{Lem.~\ref{lem:no_lolipop}: $k_I = k_T = 0$}
\hstep[\implies]{m = r \text{ and } r + k_T + k_D = n}{$\Inv{P}$ and $\Inv{Q}$ are square}
\hstep[\implies]{m + k_T + k_D = n}{Substitute $m = r$}
\hstep[\implies]{m \leq n}{$k_T, k_D \geq 0$}
\end{hcalculation}
\end{proof}

We are ready to state the equivalence between the Pigeonhole Principle and the Exchange Lemma discussed in Section~\ref{sec:pigeonhole}. As it is usually formulated, the Exchange Lemma states that if $A \subseteq \K^k$ is a set of $m$ linearly independent vectors and $B \subseteq \K^k$ is a set of $n$ spanning vectors, then $m \leq n$. As a slightly cleaner way of saying this, we may arrange the vectors in $A$ (resp. $B$) as columns of a matrix. We arrive at the following.
\begin{center}
\renewcommand{\arraystretch}{2.0}
\begin{tabular}{r l}
    (Exchange Lemma) &$\TypedMap{A}{m}{k}$ is INJ and $\;\TypedImage{B}{n}{k}\!\!\! \supseteq \Image{A} \implies m \leq n$. \\
    (Pigeonhole Principle) &$\TypedCoSpan{A}{B}{m}{k}{n}$ is TOT and INJ $ \implies m \leq n$.
\end{tabular}
\end{center}
We proceed to prove that their hypotheses are equivalent.
\begin{proposition}
Let $\TypedMap{A}{m}{k}, \TypedMap{B}{n}{k}$ be maps, then, the following are equivalent.
\begin{enumerate}
    \item[(i)] $\Map{A}$ is INJ and $\;\Image{B} \!\!\!\supseteq \Image{A}$,
    \item[(ii)] $\CoSpan{A}{B}$ is TOT and INJ.
\end{enumerate}
\end{proposition}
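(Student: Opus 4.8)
The plan is to exploit the fact that the proposition decouples into two independent equivalences. Statement (i) is the conjunction ``$\Map{A}$ is INJ'' and ``$\Image{B} \supseteq \Image{A}$'', while statement (ii) is the conjunction ``$\CoSpan{A}{B}$ is INJ'' and ``$\CoSpan{A}{B}$ is TOT''. Since $P_1 \wedge P_2 \iff Q_1 \wedge Q_2$ whenever $P_1 \iff Q_1$ and $P_2 \iff Q_2$, it suffices to match the conjuncts term by term and establish the two separate equivalences
\[
  \CoSpan{A}{B} \text{ is TOT} \iff \Image{B} \supseteq \Image{A},
\]
\[
  \CoSpan{A}{B} \text{ is INJ} \iff \Map{A} \text{ is INJ}.
\]

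First I would dispatch the totality equivalence, which is already packaged in Proposition~\ref{prop:r_block}: its TOT column states precisely that $\CoSpan{A}{B}$ is TOT iff $\Image{A} \subseteq \Image{B}$, and $\Image{A} \subseteq \Image{B}$ is merely a rewriting of $\Image{B} \supseteq \Image{A}$. No further work is required for this conjunct. For the injectivity equivalence, I would appeal to the INJ column of Proposition~\ref{prop:r_block}, whose Row~1 $\iff$ Row~2 collapses to ``$\CoSpan{A}{B}$ is INJ $\iff \Kernel{A} = \Zero$'', i.e. $\Map{A}$ is injective. Should a self-contained derivation be preferred, this also follows from a short diagram computation showing $\RelKernel{\CoSpan{A}{B}} = \Kernel{A}$: feeding the zero subspace into the right leg of the cospan trivializes the $\Map{B}$ branch and leaves the kernel of $\Map{A}$, so the relation has trivial kernel exactly when $\Map{A}$ does.

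Combining the two equivalences yields (i) $\iff$ (ii) immediately. The main obstacle is bookkeeping rather than genuine difficulty: Proposition~\ref{prop:r_block} spells out only the TOT and DET columns in full, leaving INJ and SUR to ``follow similarly'', so I must confirm that the INJ column really does reduce to the clean condition ``$\Map{A}$ is injective'' and not to some joint condition on $\Map{A}$ and $\Map{B}$. This is guaranteed by the opposite-symmetry of the diagrammatic calculus: INJ is the opposite of DET, and passing to the opposite sends the cospan $\CoSpan{A}{B}$ to $\CoSpan{B}{A}$, exchanging the roles of $\Map{A}$ and $\Map{B}$. Since the DET column was shown to be equivalent to ``$\Map{B}$ is INJ'', the same argument applied after this exchange gives ``$\Map{A}$ is INJ'', completing the match and hence the proof.
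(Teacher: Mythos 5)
Your proposal is correct and follows essentially the same route as the paper, which simply invokes the characterization of the fundamental properties of a cospan (Theorem~\ref{thm:cospan_dict}); the conjunct-by-conjunct matching of TOT with $\Image{A} \subseteq \Image{B}$ and of INJ with injectivity of $\Map{A}$ is exactly the intended ``simple application.'' The only (harmless) difference is that you observe the required equivalences already live in Proposition~\ref{prop:r_block} together with the opposite-symmetry sending $\CoSpan{A}{B}$ to $\CoSpan{B}{A}$, so the cospan decomposition underlying the full theorem is not actually needed here.
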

\begin{proof} This is a simple application of the characterization in Theorem~\ref{thm:cospan_dict}.
\end{proof}

\subsection{Invertible Matrix Theorem}
\label{sec:applications_invertible_matrix_theorem}

The \emph{Invertible Matrix Theorem} (IMT) is a central result in linear algebra
enumerating equivalent characterizations for a square matrix to have an inverse.
Theorem~\ref{thm:cospan_dict} characterizes a relation according to its fundamental properties,
and functions as a relational generalization of the IMT.
This section recovers a characterization of matrices with inverse
from Theorem~\ref{thm:cospan_dict} by applying it to the map $\Map{A}$
viewed as a cospan, i.e., with $\CoMap{B} = \Id$.
This way, the theorem specializes to a characterization of left and right-invertibility.
Following Section~\ref{sec:introduction},
we first characterize injective and surjective maps separately
and then link both results for square maps.

\begin{theorem}[Surjective/Injective Matrix Theorem]
\label{thm:ivm_sur_inj}
For a linear map \Map{A},
each column of the following table is a series of equivalent statements.
\begin{center}
\tikzfig{thm_surjective_injective_matrix/stat_1}
\end{center}
\end{theorem}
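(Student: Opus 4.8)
The plan is to obtain this theorem as a direct specialization of the characterization in Theorem~\ref{thm:cospan_dict}, taking the second leg of the cospan to be the identity. The starting observation is that a linear map $A : \K^m \to \K^n$, viewed as a relation, is exactly the cospan $\CoSpan{A}{B}$ with $B = I$: unfolding definitions, $\{(x,y) \mid Ax = Iy\} = \{(x, Ax) \mid x \in \K^m\} = \text{Gr}(A)$, where the choice $B = I$ forces the middle dimension to be $k = n$. Consequently the cospan decomposition of Theorem~\ref{thm:rcd_for_cospan} and the full dictionary of Theorem~\ref{thm:cospan_dict} apply verbatim, and the statement to be proved is just that dictionary read off in this special case.

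Next I would show that two of the four columns become vacuous. Since $A$ is a genuine map it is automatically total and deterministic (Proposition~\ref{def:fundamental_properties_and_invertibility}); equivalently, with $B = I$ we have $\im(I) = \K^n$ full and $\ker(I) = 0$ trivial, so Lemma~\ref{lem:no_lolipop} forces $k_T = k_D = 0$ unconditionally. Hence the totality and determinism columns hold tautologically and carry no content, leaving exactly the injectivity and surjectivity columns, which are the two columns of the stated table.

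It then remains to read off the surviving rows. The abstract properties in Row~1 specialize to ``$A$ is injective'' and ``$A$ is surjective'', their equivalent forms $\ker A = 0$ and $\im A = \K^n$ coming straight from Proposition~\ref{def:fundamental_properties_and_invertibility}, while the wire conditions of Lemma~\ref{lem:no_lolipop} specialize to $k_I = 0$ and $k_S = 0$. The only rows needing care are the existence statements of Proposition~\ref{prop:s_block}: with $B = I$ the condition $\exists S,\ AS = B$ collapses to the existence of a map $S$ with $AS = I$, a right-inverse of $A$, which lands in the surjectivity column; dually, by the colour-swap and opposite symmetries recorded in Section~\ref{sec:graphical_notation}, the injectivity column is populated by the existence of a left-inverse $SA = I$. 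These are precisely the left/right invertibility characterizations of Definition~\ref{def:inverse}, and together with Propositions~\ref{thm:prop_linverse} and~\ref{thm:prop_rinverse} they close the equivalence between invertibility and the fundamental properties.

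The work here is bookkeeping rather than a genuine obstacle. One must check that the type constraint $k = n$ imposed by $B = I$ is consistent throughout the decomposition, that each cospan existence statement really collapses to the right-/left-inverse form, and that the flipped composition convention is respected, so that the left-inverse appears to the \emph{right} of $A$ in the diagrammatic equation, as warned in Definition~\ref{def:inverse}. All the substantive work---explicitly constructing the inverse $S$ via the decomposition---has already been carried out in Theorems~\ref{thm:rcd_for_cospan} and~\ref{thm:cospan_dict}, so this final theorem is pure specialization.
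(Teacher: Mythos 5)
Your proposal is correct and follows essentially the same route as the paper, whose entire proof is to apply Theorem~\ref{thm:cospan_dict} to the cospan with second leg the identity. The extra bookkeeping you spell out (the collapse of the TOT/DET columns via Lemma~\ref{lem:no_lolipop}, and the specialization of Proposition~\ref{prop:s_block} to left/right-inverses in the sense of Definition~\ref{def:inverse}) is exactly what the paper leaves implicit in that one-line proof.
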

\begin{proof}
  Apply Theorem~\ref{thm:cospan_dict} to the cospan \Map{A}.
\end{proof}

For square matrices, all these statements are, in fact, equivalent.
This next result is the Invertible Matrix Theorem,
and comes as a consequence of the graph of fundamental properties.

\begin{theorem}[Invertible Matrix Theorem]
\label{thm:ivm_inv}
For a square map $\TypedMap{A}{n}{n}$,
all statements in Theorem~\ref{thm:ivm_sur_inj} are equivalent.
\end{theorem}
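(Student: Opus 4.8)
The plan is to use the graph of fundamental properties (Theorem~\ref{thm:poset_of_fundamental_properties}) to connect the two columns of Theorem~\ref{thm:ivm_sur_inj}, exactly as the classical Invertible Matrix Theorem fuses injectivity and surjectivity for square matrices. Since Theorem~\ref{thm:ivm_sur_inj} already establishes that all statements within the INJ column are mutually equivalent and likewise all statements within the SUR column are mutually equivalent, it suffices to prove the single cross-column equivalence: $\Map{A}$ is INJ $\iff$ $\Map{A}$ is SUR. Establishing this one link merges the two chains into a single chain of equivalences, which is precisely the claim.

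The key observation is that a map is, by Proposition~\ref{def:fundamental_properties_and_invertibility}, always total and deterministic. Hence when we view the square map $\TypedMap{A}{n}{n}$ as a cospan with $\CoMap{B} = \Id$, it automatically satisfies both TOT and DET. Moreover, since $\Map{A}$ is square we have $m = n$, so the two dimensional inequalities $m \leq n$ and $m \geq n$ hold simultaneously. These are exactly the side conditions required to invoke the relevant arrows of the graph in both directions.

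With these ingredients in hand, the two needed implications fall out directly. First I would apply Theorem~\ref{thm:poset_of_fundamental_properties}(iv): because $\Map{A}$ is DET, TOT, and satisfies $m \leq n$, surjectivity forces injectivity. Conversely, Theorem~\ref{thm:poset_of_fundamental_properties}(iii) gives the reverse: because $\Map{A}$ is DET, TOT, and satisfies $m \geq n$, injectivity forces surjectivity. Chaining the two yields INJ $\iff$ SUR, and combined with the intra-column equivalences from Theorem~\ref{thm:ivm_sur_inj}, every statement in the table becomes equivalent.

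There is essentially no hard step remaining: all the substantive work—the cospan decomposition of Theorem~\ref{thm:rcd_for_cospan}, the characterization of each fundamental property in Theorem~\ref{thm:cospan_dict}, and the implications encoded in the graph of Theorem~\ref{thm:poset_of_fundamental_properties}—has already been carried out. The only point demanding any care is the bookkeeping that justifies applying the graph, namely recognizing that a map is always total and deterministic and that squareness supplies both inequalities at once. Once that is noted, the Invertible Matrix Theorem is an immediate corollary of the graph of fundamental properties.
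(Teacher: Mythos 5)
Your proposal is correct and follows exactly the route the paper takes: the paper's proof of Theorem~\ref{thm:ivm_inv} is the one-line observation that Theorem~\ref{thm:poset_of_fundamental_properties} yields INJ $\iff$ SUR for a square map, and your instantiation of items (iii) and (iv) with the facts that a map is always TOT and DET and that $m = n$ supplies both inequalities is precisely the bookkeeping the paper leaves implicit. No gaps.
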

\begin{proof}
  From Theorem~\ref{thm:poset_of_fundamental_properties}, $\Map{A}$ is INJ if and only if it is SUR.
\end{proof}

Finally, it is worth noticing that, as corollary,
the IMT provides converse to Propositions~\ref{thm:prop_linverse},~\ref{thm:prop_rinverse} and~\ref{thm:prop_inverse}.
While those propositions conclude a fundamental property from the maps' invertibility,
we can now assure the existence of an inverse from these properties.

\begin{corollary}[Existence of Inverse]
  Consider a map $\Map{A}$. From the IMT, it is\newline
    \begin{tabular}{rrcl}
      (i)   & injective  &$\iff$ &it has a left-inverse; \\
      (ii)  & surjective &$\iff$ &it has a right-inverse; \\
      (iii) & bijective  &$\iff$ &it has an inverse.
    \end{tabular}
\end{corollary}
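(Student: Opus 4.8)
The plan is to prove each of the three biconditionals by splitting it into its two directions and observing that one direction is already established, while the other is supplied by the Surjective/Injective Matrix Theorem. The ``invertible $\Rightarrow$ fundamental property'' directions are precisely the content of Propositions~\ref{thm:prop_linverse}(i), \ref{thm:prop_rinverse}(i), and~\ref{thm:prop_inverse}(i), which state that a left-inverse forces injectivity, a right-inverse forces surjectivity, and a two-sided inverse forces bijectivity. Thus the only new work lies in the converses, for which I would read off the appropriate columns of Theorem~\ref{thm:ivm_sur_inj}.

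For~(i), the backward direction is Proposition~\ref{thm:prop_linverse}(i). For the forward direction, I would observe that ``$A$ is injective'' and ``$A$ has a left-inverse'' occupy the same (INJ) column of the table in Theorem~\ref{thm:ivm_sur_inj}, hence are equivalent, so injectivity produces the left-inverse. Item~(ii) is entirely dual: its backward direction is Proposition~\ref{thm:prop_rinverse}(i), and its forward direction reads ``$A$ is surjective'' $\iff$ ``$A$ has a right-inverse'' off the SUR column of the same theorem.

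For~(iii), the backward direction is Proposition~\ref{thm:prop_inverse}(i). For the forward direction, I would unfold bijectivity: by Proposition~\ref{def:fundamental_properties_and_invertibility} a bijective map is in particular both injective and surjective, so parts~(i) and~(ii) already proven supply a left-inverse $B$ and a right-inverse $C$. The key step is then to glue these into a genuine two-sided inverse, using the remark following Definition~\ref{def:inverse} that a map possessing both a left- and a right-inverse has them coincide ($B = C$) and hence has an inverse. It is here that the IMT flavour of the statement surfaces: by items~(v) and~(vi) of Theorem~\ref{thm:poset_of_fundamental_properties}, a map that is both injective and surjective satisfies $m \leq n$ and $m \geq n$, so it is forced to be square, placing us exactly in the regime of the Invertible Matrix Theorem (Theorem~\ref{thm:ivm_inv}).

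The hard part is not located inside this corollary at all: the substantive direction---manufacturing an inverse from a fundamental property---is already packaged in Theorem~\ref{thm:ivm_sur_inj}, which in turn rests on the Cospan Decomposition of Theorem~\ref{thm:rcd_for_cospan}. Within the corollary itself, the only genuine manipulation is the one-line gluing of the one-sided inverses for~(iii); everything else is a matter of pairing each biconditional's two halves with the correct previously established result.
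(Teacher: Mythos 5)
Your proposal is correct and follows essentially the same route as the paper: the backward directions come from Propositions~\ref{thm:prop_linverse}(i), \ref{thm:prop_rinverse}(i) and~\ref{thm:prop_inverse}(i), while the forward directions are read off the INJ and SUR columns of Theorem~\ref{thm:ivm_sur_inj}, with the two one-sided inverses in~(iii) glued via the remark after Definition~\ref{def:inverse}. The paper leaves these pairings implicit (the corollary is stated as an immediate consequence of the IMT), so your write-up just makes explicit what the paper intends.
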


\section{Pair of Matrices and their Relevant Subspaces}

The Generalized Singular Value Decomposition (GSVD)~\cite{gsvd1976,gsvd1981}
is a powerful tool for simultaneously decomposing a pair of matrices.
Unfortunately, similarly to the usual SVD, it only works for the fields of real and complex numbers.
In this section, we show that the relational decomposition from Theorem~\ref{thm:rcd_for_cospan}
induces a similar simultaneous matrix decomposition that works without assumptions on the underlying scalar field.
We finish this section by showing how this decomposition can compute the main subspaces involving any pair of linear maps.

\subsection{Decomposing Pairs of Matrices}
\label{sec:decomposition_matrices}

Theorem~\ref{thm:rcd_for_cospan} provides a decomposition for any cospan $\CoSpan{A}{B}$.
Since these are formed by two linear maps, a question arises: can we turn it into mutual---but separate---decompositions for its constituent factors $\Map{A}$ and $\Map{B}$?
The answer is positive, requiring the introduction of another linear map $\Map{H}$ connecting both factorizations. We construct the needed map in Theorem~\ref{thm:abstract} and associate it with the linear decomposition in Theorem~\ref{thm:rcd_for_pairs_of_matrices}.

The next theorem is this section's workhorse
in that, it constructs a linking relation with all the properties needed for the main result. It presents well-known hypothesis when dealing with spans and cospans~\cite{freyd1990categories}.

\begin{theorem}\label{thm:abstract}
For all $\CoSpan{A}{B}=\CoSpan{D_1}{D_2}$
where \tikzfig{rcd_for_pairs_of_matrices/TheoremAbstract/D1D2} is surjective,
let
\begin{equation}
\Rel{H}=\tikzfig{rcd_for_pairs_of_matrices/TheoremAbstract/defH}
\end{equation}
The relation just defined satisfies
\begin{enumerate}
    \item[(i)] \Rel{H} is deterministic;
    \item[(ii)] \Rel{H} is total;
    \item[(iii)] \Rel{H} is injective;
    \item[(iv)] \Rel{H} is surjective if \tikzfig{rcd_for_pairs_of_matrices/TheoremAbstract/AB} is surjective;
    \item[(v)] \tikzfig{rcd_for_pairs_of_matrices/TheoremAbstract/defAeB_Rel};
    \item[(vi)] \Rel{H} is the unique linear relation satisfying (i), (ii) and (v).
\end{enumerate}
\end{theorem}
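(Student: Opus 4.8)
The plan is to read the defining diagram of $\Rel H$ as the composite $g^{op}\,;\,f$, where $g$ and $f$ are the copairing maps of the two cospans: $\Map g$ sends $(x,y)$ to the $(D_1,D_2)$-apex value $D_1x-D_2y$, and $\Map f$ sends $(x,y)$ to the $(A,B)$-apex value $Ax-By$. Thus $\Map g$ and $\Map f$ have codomains $\K^{k'}$ and $\K^{k}$, and $\Rel H$ is a relation from the $D$-apex to the $(A,B)$-apex, matching the typing in (v). Two facts drive the whole argument. First, because the cospans are \emph{equal}, both copairings have the same kernel, namely $\Rel R=\CoSpan{A}{B}=\CoSpan{D_1}{D_2}$, so $\ker f=\ker g$. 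Second, the standing hypothesis is precisely that $g$ is surjective, while the hypothesis appearing in (iv) is that $f$ is surjective. I will prove (i), (ii), (v) and (vi) explicitly, and then obtain (iii) and (iv) for free from the symmetry that swaps the two presentations $(A,B)\leftrightarrow(D_1,D_2)$: this swap exchanges $f$ and $g$, sends $\Rel H=g^{op};f$ to $\Rel H^{op}=f^{op};g$, and therefore turns determinism into injectivity and totality into surjectivity.

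For (i) and (iii), recall that determinism is the inclusion $\RelCoKernel{H}\subseteq\Zero$ and injectivity the inclusion $\RelKernel{H}\subseteq\Zero$, so by Proposition~\ref{def:fundamental_properties_and_invertibility} only the nontrivial inclusion must be shown. Expanding $\Rel H=g^{op};f$, its cokernel is $f(\ker g)$, the image under $f$ of the fibre of $g$ over $0$; using $\ker g=\ker f$ this equals $f(\ker f)=\{0\}$, which is exactly the vanishing we want. Diagrammatically this is the step where the equality $\CoSpan{D_1}{D_2}=\CoSpan{A}{B}$ is used to trade the $D$-apex for the $A$-apex, after which the cup/Frobenius identities of Figure~\ref{fig:typerelations} (``$\CoSpan{\;}{\;}\subseteq\Id$'' and ``$\Id\subseteq\CoSpan{\;}{\;}$'') collapse the internal wires to a zero. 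Injectivity (iii) is the same computation with $f$ and $g$ interchanged, i.e. it is (i) for the swapped presentation, and it needs no surjectivity hypothesis.

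For (ii), totality is the inclusion $\CoDiscard\subseteq\RelCoImage{H}$; discarding the $f$-output of $g^{op};f$ leaves $g^{op}$, so the coimage of $\Rel H$ equals the coimage of $g^{op}$, which is $\CoDiscard$ precisely because $g$ is surjective. Property (iv) is this same statement for the swapped presentation, so it holds as soon as $f$—the copairing of $A$ and $B$—is surjective. Property (v) is a yanking computation: composing $\Map{D_1}$ with $g^{op}$ produces the $D$-apex constraint $D_1x'-D_2y'=D_1x$, which by the equality of cospans coincides with the $(A,B)$-constraint, so after applying $\Map f$ and cancelling the internal wires via the snake identities only $\Map A$ survives; hence $\Map A=\Map{D_1}\,;\,\Rel H$, and symmetrically $\Map B=\Map{D_2}\,;\,\Rel H$. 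That $\Rel H$ is a map, so these relational equations are honest factorizations, is exactly (i) together with (ii).

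Finally, for (vi), suppose $\Rel{H'}$ also satisfies (i), (ii) and (v). By (i) and (ii) it is a map, and by (v) it agrees with $\Rel H$ after precomposition with both $\Map{D_1}$ and $\Map{D_2}$, hence after precomposition with the copairing $g$ (whose action splits over the two legs). Since $g$ is surjective it has a right-inverse by the Surjective Matrix Theorem~\ref{thm:ivm_sur_inj}, so it is right-cancellable in the sense of Proposition~\ref{thm:prop_rinverse}(iv); cancelling $g$ yields $\Rel{H'}=\Rel H$. I expect the main obstacle to be the determinism/injectivity step: converting the \emph{equality} of the two cospans into the vanishing of the (co)kernel of $\Rel H$ is where the delicate GLA bookkeeping lives, since one must rewrite the apex of one presentation into the other and then discharge the internal wires with the Frobenius and bialgebra identities, all without smuggling in a surjectivity hypothesis that (i) and (iii) are not permitted to assume.
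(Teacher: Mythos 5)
Your proposal is correct and follows essentially the same route as the paper: the same reading of $H$ as the composite of the opposite of one copairing with the other, the same use of the cospan equality (the paper's Lemma~\ref{lemma:abstract}) to annihilate the cokernel and kernel in (i) and (iii), surjectivity of the relevant copairing for (ii), (iv) and the epi-cancellation in (vi), and the inclusion-between-maps argument (Lemma~\ref{lem:pair_of_matrices_1} together with items (i)--(ii)) to upgrade the containment in (v) to an equality. The only real difference is cosmetic: you derive (iii) and (iv) from (i) and (ii) by swapping the two presentations instead of recomputing them, which also correctly isolates which items actually depend on the surjectivity hypothesis, whereas the paper proves each item by a separate diagram calculation.
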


\begin{proof}
(i) \Rel{H} is deterministic.
\begin{hcalculation}[=]{\tikzfig{rcd_for_pairs_of_matrices/TheoremAbstract/item1/hSV_step1}}
\hstep{\tikzfig{rcd_for_pairs_of_matrices/TheoremAbstract/item1/hSV_step2}}{Def. $\Rel{H}$}
\hstep{\tikzfig{rcd_for_pairs_of_matrices/TheoremAbstract/item1/hSV_step3}}{Lemma~\ref{lemma:abstract}}
\hstep[\subseteq]{\tikzfig{rcd_for_pairs_of_matrices/TheoremAbstract/item1/hSV_step4}}{Fig.\ref{fig:typerelations}-DET: \tikzfig{theorems/detid}}
\hstep{\Zero}{Fig.\ref{fig:Frobenius_Algebra}: \tikzfig{theorems/frobenius}}
\end{hcalculation}

(ii) \Rel{H} is total.
\begin{hcalculation}[=]{\tikzfig{rcd_for_pairs_of_matrices/TheoremAbstract/item1/hTOT_step1}}
\hstep{\tikzfig{rcd_for_pairs_of_matrices/TheoremAbstract/item1/hTOT_step2}}{Def. $\Rel{H}$}
\hstep{\tikzfig{rcd_for_pairs_of_matrices/TheoremAbstract/item1/hTOT_step3}}{Fig.\ref{fig:Bialgebra}: \tikzfig{theorems/blackdestroyer}}
\hstep{\tikzfig{rcd_for_pairs_of_matrices/TheoremAbstract/item1/hTOT_step4}}{Fig.\ref{fig:typerelations}-TOT: \tikzfig{theorems/blackball}}
\hstep[\supseteq ]{\tikzfig{rcd_for_pairs_of_matrices/TheoremAbstract/item1/hTOT_step5}}{Fig.\ref{fig:typerelations}-SUR: \tikzfig{theorems/sur}}
\end{hcalculation}

(iii) \Rel{H} is injective.
\begin{hcalculation}[=]{\tikzfig{rcd_for_pairs_of_matrices/TheoremAbstract/item4/step1}}
\hstep{\tikzfig{rcd_for_pairs_of_matrices/TheoremAbstract/item4/step2}}{Def. $\Rel{H}$}
\hstep{\tikzfig{rcd_for_pairs_of_matrices/TheoremAbstract/item4/step3}}{Lemma~\ref{lemma:abstract}}
\hstep{\tikzfig{rcd_for_pairs_of_matrices/TheoremAbstract/item4/step4}}{Fig.\ref{fig:typerelations}-SUR: \tikzfig{theorems/surideq}}
\hstep{\CoZero}{Fig.\ref{fig:Frobenius_Algebra}: \tikzfig{theorems/frobenius}}
\end{hcalculation}

(iv) If \tikzfig{rcd_for_pairs_of_matrices/TheoremAbstract/AB} is surjective, then so is \Rel{H}.
\begin{hcalculation}[=]{\tikzfig{rcd_for_pairs_of_matrices/TheoremAbstract/item5/step1}}
\hstep{\tikzfig{rcd_for_pairs_of_matrices/TheoremAbstract/item5/step2}}{Def. $\Rel{H}$}
\hstep{\tikzfig{rcd_for_pairs_of_matrices/TheoremAbstract/item5/step3}}{Fig.\ref{fig:Bialgebra}: \tikzfig{theorems/blackdestroyer}}
\hstep{\tikzfig{rcd_for_pairs_of_matrices/TheoremAbstract/item5/step4}}{Fig.\ref{fig:typerelations}-TOT: \tikzfig{theorems/blackball}}
\hstep{\CoDiscard}{Hyp: \ref{fig:typerelations}-SUR: \tikzfig{theorems/sureq}}
\end{hcalculation}

(v) We show $\Map{A} \subseteq \Comp{D_1/Map, H/Rel}$.
The proof for $\Map{B}$ is similar.
The equalities follow from Lemma~\ref{lem:pair_of_matrices_1} and items~(i) and~(ii) of this theorem.
\begin{hcalculation}[=]{\tikzfig{rcd_for_pairs_of_matrices/TheoremAbstract/item2/step1}}
\hstep{\tikzfig{rcd_for_pairs_of_matrices/TheoremAbstract/item2/step2}}{Def. $\Rel{H}$}
\hstep[\supseteq]{\tikzfig{rcd_for_pairs_of_matrices/TheoremAbstract/item2/step3}}{Fig.\ref{fig:Minimum_and_maximum}: \tikzfig{theorems/zero}}
\hstep{\tikzfig{rcd_for_pairs_of_matrices/TheoremAbstract/item2/step4}}{Fig.\ref{fig:Commutative_Comonoid}: \tikzfig{theorems/unit}}
\hstep[\supseteq]{\Map{A}}{Fig.\ref{fig:typerelations}-TOT: \tikzfig{theorems/totalid1}}
\end{hcalculation}

(vi) Lastly, we show $\Rel{H}$ is the unique map satisfying (v).
Suppose we have two maps $\Map{H_1}, \Map{H_2}$ satisfying (v). Then,
\begin{hcalculation}[\iff]{\tikzfig{rcd_for_pairs_of_matrices/TheoremAbstract/item3/step1}}
\hstep{\tikzfig{rcd_for_pairs_of_matrices/TheoremAbstract/item3/step12}}{Fig.\ref{fig:Connect_sum}: \tikzfig{theorems/connect}}
\hstep{\tikzfig{rcd_for_pairs_of_matrices/TheoremAbstract/item3/step2}}{Fig.\ref{fig:Flip}: \tikzfig{theorems/flip2}}
\hstep[\implies]{\tikzfig{rcd_for_pairs_of_matrices/TheoremAbstract/item3/step3}}{Multiply \tikzfig{rcd_for_pairs_of_matrices/TheoremAbstract/D1D21}}
\hstep{\tikzfig{rcd_for_pairs_of_matrices/TheoremAbstract/item3/step5}}{Fig.\ref{fig:typerelations}-SUR: \tikzfig{theorems/surideq}}
\hstep{\tikzfig{rcd_for_pairs_of_matrices/TheoremAbstract/item3/step4}}{Fig.\ref{fig:Frobenius_Algebra}: \tikzfig{theorems/frobenius}}
\end{hcalculation}
This concludes the proof.
\end{proof}

\begin{remark}
  Theorem~\ref{thm:abstract} implies that the relation $\Rel{H}$ is a linear map.
  Therefore, we from now on denote it as $\Map{H}$.
\end{remark}

We proceed to show how to attain a decomposition for a pair $\Map{A}$ and $\Map{B}$ given the relational decomposition of the cospan form $\CoSpan{A}{B}$. For this, we need Definition~\ref{def:D1_e_D2} and Lemma~\ref{lem:D1_D2sur} below.

\begin{definition}
  \label{def:D1_e_D2}
  For maps $\Map{A}$, $\Map{B}$ decomposed as in Theorem~\ref{thm:rcd_for_cospan},
  \[\TypedRel{R}{m}{n} = \GRR,\]
  define the auxiliary maps $\Map{D_1}$ and $\Map{D_2}$ as
  \[
    \begin{aligned}
      \tikzfig{rcd_for_pairs_of_matrices/defD1} & &
      \tikzfig{rcd_for_pairs_of_matrices/defD2}
    \end{aligned}
  \]
\end{definition}

\begin{lemma}\label{lem:D1_D2sur} The maps \Map{D_1} and \Map{D_2} of Definition~\ref{def:D1_e_D2} are such that $\tikzfig{rcd_for_pairs_of_matrices/step22}$ is surjective.
\end{lemma}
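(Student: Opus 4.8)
The plan is to unfold the definitions of $\Map{D_1}$ and $\Map{D_2}$ from Definition~\ref{def:D1_e_D2} inside the combined map $\tikzfig{rcd_for_pairs_of_matrices/step22}$ and reduce the surjectivity claim to a transparent statement about the central wires of the Cospan Decomposition. Since the image of any relation is contained in the full space (the inequality $\RelImage{R} \subseteq \CoDiscard$ of Figure~\ref{fig:Inequality}), by the remark following Proposition~\ref{def:fundamental_properties_and_invertibility} it is enough to prove the reverse, non-trivial inequality: that the image of the combined map already contains $\CoDiscard$.

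First I would substitute the explicit forms of $\Map{D_1}$ and $\Map{D_2}$. Both legs are assembled from the invertible factor $\Inv{P}$ of the decomposition together with copy, identity, and zero structure distributed across the central wires labelled $r$, $k_I$, $k_S$, $k_T$, $k_D$. Because $\Inv{P}$ has an inverse (Proposition~\ref{thm:prop_inverse}), it is in particular surjective, so composing with $\Inv{P^{-1}}$ on the output peels $\Inv{P}$ off the diagram entirely; the surjectivity of the combined map is thereby reduced to the surjectivity of the residual wiring feeding the apex.

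For that residual wiring I would argue that each group of central wires is supplied by $\Map{D_1}$, by $\Map{D_2}$, or by both, so that once the black sum node joins the two contributions, every apex coordinate is reached. Concretely, the copy–sum pattern collapses to identities on the apex wires using the comonoid and bialgebra laws (Figures~\ref{fig:Commutative_Comonoid} and~\ref{fig:Bialgebra}) together with the totality characterization of Figure~\ref{fig:typerelations}, leaving exactly $\CoDiscard$ and hence the desired inequality.

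The main obstacle is the bookkeeping rather than any deep idea: one must correctly match which of the five groups of central wires is contributed by $\Map{D_1}$, by $\Map{D_2}$, or jointly, and verify that their sum covers the whole apex with no coordinate omitted. Once the wires are tracked, the invertibility of $\Inv{P}$ carries the argument, and surjectivity follows mechanically from the monoid and comonoid laws in the appendix.
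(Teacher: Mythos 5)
Your proposal is correct and follows essentially the same route as the paper: unfold Definition~\ref{def:D1_e_D2} and collapse the resulting wiring to $\CoDiscard$, which the paper does in a single step via the bone law of Figure~\ref{fig:Frobenius_Algebra} together with the unit law of Figure~\ref{fig:Commutative_Comonoid} --- exactly your observation that every apex wire is reached by $\Map{D_1}$, by $\Map{D_2}$, or by both, so the sum node covers the whole codomain. The one inaccuracy is harmless: $\Map{D_1}$ and $\Map{D_2}$ consist only of identity, zero and discard structure on the central wires (the invertible factors $\Inv{P}$ and $\Inv{Q}$ sit outside them and are absorbed into $\Map{\tilde{A}}$ and $\Map{\tilde{B}}$ in the proof of Theorem~\ref{thm:rcd_for_pairs_of_matrices}), so there is nothing to peel off and neither Proposition~\ref{thm:prop_inverse} nor the surjectivity of $\Inv{P}$ is ever invoked.
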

\begin{proof}
\begin{hcalculation}[=]{\tikzfig{rcd_for_pairs_of_matrices/step2}}
\hstep{\tikzfig{rcd_for_pairs_of_matrices/step3}}{Def.\ref{def:D1_e_D2}: $\Map{D_1}$ and $\Map{D_2}$}
%\hstep{\tikzfig{rcd_for_pairs_of_matrices/step4}}{\tikzfig{theorems/bone} $+$ \tikzfig{theorems/unit}}
\hstep{\CoDiscard}{Fig.\ref{fig:Frobenius_Algebra}: \tikzfig{theorems/bone} $+$ \ref{fig:Commutative_Comonoid}: \tikzfig{theorems/unit}}
\end{hcalculation}
\end{proof}

We have all the necessary tools for simultaneously decomposing a matrix pair.
The next results puts together Theorems~\ref{thm:rcd_for_cospan} and~\ref{thm:abstract}
to construct the desired matrix decompositions.

\begin{theorem}[Pair Decomposition]
  \label{thm:rcd_for_pairs_of_matrices}
Let $\Rel{R}=\CoSpan{A}{B}$ be decomposed as in Theorem~\ref{thm:rcd_for_cospan}, i.e.,
\[\TypedRel{R}{m}{n} = \GRR.\]
Then, there exists a unique injective linear map $\Map{H}$ such that
\[\tikzfig{rcd_for_pairs_of_matrices/stat}\]
\end{theorem}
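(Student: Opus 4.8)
The plan is to obtain everything from Theorem~\ref{thm:abstract}, whose conclusions already package a linking relation with all the required features; the real work is verifying that theorem's hypotheses for the concrete maps $\Map{D_1}, \Map{D_2}$ of Definition~\ref{def:D1_e_D2}. Theorem~\ref{thm:abstract} applies to any factorization $\CoSpan{A}{B} = \CoSpan{D_1}{D_2}$ for which the combined map $\tikzfig{rcd_for_pairs_of_matrices/step22}$ is surjective, so two facts must be established. The surjectivity is exactly Lemma~\ref{lem:D1_D2sur}. The equality $\CoSpan{D_1}{D_2} = \CoSpan{A}{B}$ I would establish by starting from the Cospan Decomposition of $\Rel{R} = \CoSpan{A}{B}$ supplied by Theorem~\ref{thm:rcd_for_cospan} and regrouping the central spiders, so that the big diagram is seen to factor as the cospan with legs $\Map{D_1}$ and $\Map{D_2}$; substituting Definition~\ref{def:D1_e_D2} then matches both sides.

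Once the hypotheses hold, I would invoke Theorem~\ref{thm:abstract} to produce the relation $\Rel{H}$. Items (i) and (ii) make it deterministic and total, hence a linear map by Proposition~\ref{def:fundamental_properties_and_invertibility} (so we may write $\Map{H}$), and item (iii) makes it injective; this secures the existence of an injective linear map. The displayed decomposition \[\tikzfig{rcd_for_pairs_of_matrices/stat}\] is then nothing but item (v), namely $\Map{A} = \Comp{D_1/Map, H/Rel}$ and $\Map{B} = \Comp{D_2/Map, H/Rel}$, once the definitions of $\Map{D_1}, \Map{D_2}$ are plugged in and the resulting diagrams are simplified using invertibility of $\Inv{P}, \Inv{Q}$ together with the spider and bialgebra laws.

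Uniqueness is immediate from item (vi) of Theorem~\ref{thm:abstract}: any injective map satisfying the displayed decomposition is in particular a total, deterministic relation satisfying (v), and (vi) says there is exactly one such relation. The main obstacle I anticipate is bookkeeping rather than conceptual---carefully aligning the wire types and the central spider structure so that $\CoSpan{D_1}{D_2} = \CoSpan{A}{B}$ holds on the nose, and then checking that substituting $\Map{D_1}, \Map{D_2}$ into item (v) really reproduces the target pictures. Since Theorem~\ref{thm:abstract} and Lemma~\ref{lem:D1_D2sur} are already available, no fresh reasoning about $\Rel{H}$ itself is needed; the proof is purely a matter of connecting the general construction to the explicit decomposition data.
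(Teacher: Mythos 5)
Your proposal follows the paper's proof essentially step for step: verify the hypotheses of Theorem~\ref{thm:abstract} (surjectivity of the combined map via Lemma~\ref{lem:D1_D2sur}, the cospan equality by regrouping the decomposition from Theorem~\ref{thm:rcd_for_cospan}), then read off existence and injectivity of $\Map{H}$ from items (i)--(iii), the factorization from item (v), and uniqueness from item (vi). The one correction is that the equality to which Theorem~\ref{thm:abstract} is applied is $\CoSpan{\tilde{A}}{\tilde{B}} = \CoSpan{D_1}{D_2}$, with the invertible factors $\Inv{P}, \Inv{Q}$ stripped off and reintroduced only at the end via Proposition~\ref{thm:prop_inverse}; the equality $\CoSpan{D_1}{D_2} = \CoSpan{A}{B}$ you state does not hold on the nose, which is exactly the wire-type bookkeeping you anticipated.
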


\begin{proof} Note that
\begin{hcalculation}[\iff]{\TypedCoSpan{A}{B}{m}{}{n} = \GRR}
\hstep{\CoSpan{A}{B} = \tikzfig{rcd_for_pairs_of_matrices/step1}}{\hyperref[fig:lawsSMC]{SSM}: Rearrange wires}
\hstep{\CoSpan{A}{B}=\tikzfig{rcd_for_pairs_of_matrices/step12}}{Def.\ref{def:D1_e_D2}: $\Map{D_1}$ and $\Map{D_2}$}
\hstep{\tikzfig{rcd_for_pairs_of_matrices/step11} = \CoSpan{D_1}{D_2}}{Prop.~\ref{thm:prop_inverse}: $\Inv{P}$, $\Inv{Q}$ have inverses.}
\hstep{\CoSpan{\tilde{A}}{\tilde{B}} = \CoSpan{D_1}{D_2}}{$\tikzfig{rcd_for_pairs_of_matrices/defAtil}$ \\ $\tikzfig{rcd_for_pairs_of_matrices/defBtil}$}
\end{hcalculation}

By Lemma~\ref{lem:D1_D2sur} and Theorem~\ref{thm:abstract},
there exists a unique injective linear map
\[\Map{H}=\tikzfig{rcd_for_pairs_of_matrices/defH},\]
and, from Theorem~\ref{thm:abstract} item (v),
\begin{hcalculation}[\iff]{\tikzfig{rcd_for_pairs_of_matrices/step5}}
\hstep{\tikzfig{rcd_for_pairs_of_matrices/step6}}{Def. $\Map{\tilde{A}}$ and $\Map{\tilde{B}}$}
\hstep{\tikzfig{rcd_for_pairs_of_matrices/step7}}{Multiply $\tikzfig{rcd_for_pairs_of_matrices/step9}$}
\hstep{\tikzfig{rcd_for_pairs_of_matrices/step8}}{Prop.~\ref{thm:prop_inverse}: $\Inv{P}$ and $\Inv{Q}$ have inverses}
\hstep{\tikzfig{rcd_for_pairs_of_matrices/stat}}{Def. $\Map{D_1}$ and $\Map{D_2}$}
\end{hcalculation}
\end{proof}

\subsection{Calculating Subspaces}
\label{sec:applications_subspaces}
For a pair of matrices $A$ and $B$ with the same amount of rows,
a common problem in linear algebra is to find bases for the sum $\im(A) + \im(B)$ and the intersection $\im(A) \cap \im(B)$.
One method to calculate them is the Zassenhaus Algorithm~\cite{Fischer2012}. It consists of putting the matrix $\begin{bsmallmatrix} A & A \\ B & 0 \end{bsmallmatrix}$ in row-echelon form. The bases are then constructed from the coefficients of the resulting matrix. In this section, we show that the matrix $H$ obtained in Theorem~\ref{thm:abstract} is an alternative method for calculating such bases. Its columns contain not only bases for the sum and intersection but also bases for other relevant subspaces obtained from $\im(A)$ and $\im(B)$ using the standard operations on vector spaces---sum, intersection, and complement, as defined below.

\begin{definition}
    Given subspaces $\Subspace{A}$, $\Subspace{B}$,
    define their
    \begin{enumerate}
      \item[(i)] \textbf{Sum:}
          $\tikzfig{subspaces/AB_sum}      = \left\{a+b \mid a \in A \;\text{and}\; b \in B \right\}$,
        \item[(ii)] \textbf{Intersection:}
          $\tikzfig{subspaces/AB_intersec} = \left\{x \mid x \in A \;\text{and}\; x \in B \right\}$,
        \item[(iii)] \textbf{Complement:}
          \emph{$\Subspace{A}$ complements $\Subspace{B}$ wrt a subspace $\Subspace{C}$} whenever
        their sum is $\Subspace{C}$ and their intersection is zero, i.e.,
        \[
          \Subspace{A + B} = \Subspace{C} \;\text{ and }\; \Subspace{A \cap B} = \Zero.
        \]
    \end{enumerate}
\end{definition}

\begin{theorem}\label{thm:subspaces}
Let $A, B$ be matrices and $\mathcal{A} \coloneq \im(A)$,
$\mathcal{B} \coloneq \im(B)$.
Then, we have the following equalities,
where the complements are with respect to $\mathcal{A} + \mathcal{B}$.
\begin{center}
\bgroup
\renewcommand*{\arraystretch}{2.0}
\begin{tabular}{crcl crcl}
$(i)$   & $\mathcal{A}$                  & $=$ & \tikzfig{subspaces/imAH}        & $(v)$         & $\mathcal{A}$ is complemented by               & \; & \tikzfig{subspaces/b-a} \\
$(ii)$  & $\mathcal{B}$                  & $=$ & \tikzfig{subspaces/imBH}        & $(vi)$        & $\mathcal{B}$ is complemented by  & \;           & \tikzfig{subspaces/a-b} \\
$(iii)$ & $\mathcal{A} \cap \mathcal{B}$ & $=$ & \tikzfig{subspaces/intersecH}   & $(vii)$       & $\mathcal{A} \cap \mathcal{B}$ is complemented by   & \; & \tikzfig{subspaces/ab-a-b} \\
$(iv)$  & $\mathcal{A}+\mathcal{B}$      & $=$ & \tikzfig{subspaces/sumH} & $(viii)$ & $\left \{ 0 \right \}$    & $=$ & \tikzfig{subspaces/empty}
\end{tabular}
\egroup
\end{center}
\end{theorem}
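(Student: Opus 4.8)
The plan is to push all eight equalities through the single linking map $\Map{H}$, which is where the content of the decomposition is concentrated. By Theorem~\ref{thm:abstract}$(v)$ (instantiated as in the proof of Theorem~\ref{thm:rcd_for_pairs_of_matrices}) we have $\Map{A} = \Comp{D_1/Map, H/Rel}$ and $\Map{B} = \Comp{D_2/Map, H/Rel}$, where $\Map{D_1}, \Map{D_2}$ of Definition~\ref{def:D1_e_D2} are the projections that, among the central wires of types $k_I, k_S, k_T, k_D, r$ produced by Theorem~\ref{thm:rcd_for_cospan}, keep exactly those bundles feeding $\Map{A}$ and those feeding $\Map{B}$, and where $\Map{H}$ is injective by Theorem~\ref{thm:abstract}$(iii)$. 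The first task is therefore bookkeeping: each right-hand side of the table is $\Map{H}$ precomposed with a subspace $V_\bullet$ that retains a fixed set $S_\bullet$ of central wires and zeroes the rest, and the eight sets $S_\bullet$ are built from the two index sets $S_A$ (wires hit by $\Map{D_1}$) and $S_B$ (wires hit by $\Map{D_2}$) by union, intersection and set difference. I would record this dictionary once at the start.

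With the dictionary in hand, the images are immediate. Since $\Map{A} = \Comp{D_1/Map, H/Rel}$, computing $\im(A)$ pushes $\im(D_1)$ forward through $\Map{H}$, so $\mathcal{A} = \im(D_1)\,;\,\Map{H}$, which is precisely the claimed diagram; this proves $(i)$, and $(ii)$ is the same argument with $B, D_2$ replacing $A, D_1$. For the sum $(iv)$, the subspace $\mathcal{A}+\mathcal{B}$ is the image of the paired map, and since both factors run through the same $\Map{H}$ we obtain $\mathcal{A}+\mathcal{B} = \bigl(\im(D_1)+\im(D_2)\bigr)\,;\,\Map{H}$; as $\im(D_1)+\im(D_2)$ is the coordinate subspace on $S_A\cup S_B$, this is the stated expression.

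The real work is the intersection $(iii)$, on which the complement statements also depend. It requires commuting an intersection past composition with $\Map{H}$, i.e. the identity
\[
  (V\,;\,\Map{H}) \cap (W\,;\,\Map{H}) \;=\; (V\cap W)\,;\,\Map{H},
\]
which is false for a general map but holds because $\Map{H}$ is injective, $\Kernel{H}=\Zero$. I would isolate this as a lemma in the graphical calculus: the inclusion $\supseteq$ is automatic from monotonicity of composition, while $\subseteq$ is exactly where injectivity enters, using $\Kernel{H}=\Zero$ together with the copy/Frobenius rules to show that any vector common to both images must already coincide on the central wires before being sent through $\Map{H}$. Granting the lemma, $(iii)$ follows from $\mathcal{A}\cap\mathcal{B} = \bigl(\im(D_1)\cap\im(D_2)\bigr)\,;\,\Map{H}$ together with the fact that $\im(D_1)\cap\im(D_2)$ is the coordinate subspace on $S_A\cap S_B$.

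Finally, each complement assertion $(v)$--$(viii)$ is checked against the two defining conditions of a complement with respect to $\mathcal{A}+\mathcal{B}$: that the two subspaces sum to $\mathcal{A}+\mathcal{B}$ and that their intersection is $\Zero$. After rewriting both subspaces in the form $V_\bullet\,;\,\Map{H}$, the injectivity lemma again moves both the sum and the intersection onto the central wires, where they reduce to the trivial disjoint-union identities such as $S_A \sqcup (S_B\setminus S_A) = S_A\cup S_B$ and $S_A \cap (S_B\setminus S_A)=\varnothing$. Thus the whole theorem rests on the single structural fact that $\Map{H}$ is injective, and I expect the main obstacle to be the clean graphical proof of the intersection lemma above, since it is exactly the step at which injectivity cannot be removed.
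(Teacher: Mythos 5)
Your proposal is correct and follows essentially the same route as the paper: decompose $\Map{A}$ and $\Map{B}$ through the shared linking map $\Map{H}$ via Theorem~\ref{thm:rcd_for_pairs_of_matrices}, push images forward (items (i), (ii), (iv)), and invoke injectivity of $\Map{H}$ exactly where the paper applies the INJ rule of Figure~\ref{fig:typerelations}, namely to commute the intersection past composition with $\Map{H}$ in item (iii) and in the complement checks. The only cosmetic difference is that you package that step as an explicit lemma $(V ; H)\cap(W ; H)=(V\cap W) ; H$ and phrase the final bookkeeping in terms of coordinate index sets, which is what the paper's comonoid/bialgebra rewrites compute diagrammatically (and note that for the sum in (iv) injectivity is not actually needed).
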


\begin{proof}
\begin{itemize}
\item[(i)]\textbf{[$\mathcal{A}$] }
\begin{hcalculation}[=]{\Image{A}}
\hstep{\tikzfig{subspaces/imAstep1}}{Thm.~\ref{thm:rcd_for_pairs_of_matrices}}
\hstep{\tikzfig{subspaces/imAH}}{Fig.\ref{fig:typerelations}-SUR: \tikzfig{theorems/blackballinv}}\end{hcalculation}

\item[(ii)] [\textbf{$\mathcal{B}$}] Analogous to item~(i).

\item[(iii)]\textbf{[$\mathcal{A} \cap \mathcal{B}$]}
\begin{hcalculation}[=]{\tikzfig{subspaces/intersec}}
\hstep{\tikzfig{subspaces/intersecstep2}}{Items (i) and $(ii)$}
\hstep{\tikzfig{subspaces/intersecstep4}}{Fig.\ref{fig:typerelations}-INJ: \tikzfig{theorems/flip3}}
\hstep{\tikzfig{subspaces/intersecstep5}}{Fig.\ref{fig:Commutative_Comonoid}: \tikzfig{theorems/unitblack} $+$ \ref{fig:Bialgebra}: \tikzfig{theorems/whitedestroyer} $+$ \ref{fig:Bialgebra}: \tikzfig{theorems/bonecolor}}
\end{hcalculation}

\item[(iv)]\textbf{[$\mathcal{A} + \mathcal{B}$]}
\begin{hcalculation}[=]{\tikzfig{subspaces/add_1}}
\hstep{\tikzfig{subspaces/add_2}}{Items (i) and $(ii)$}
\hstep{\tikzfig{subspaces/add_3}}{\ref{fig:Flip}: \tikzfig{theorems/flip2}}
\hstep{\tikzfig{subspaces/add_4}}{\ref{fig:Commutative_Comonoid}: \tikzfig{theorems/unit} $+$ \ref{fig:Bialgebra}: \tikzfig{theorems/blackdestroyer} $+$ \ref{fig:Bialgebra}: \tikzfig{theorems/bonecolor}}
\end{hcalculation}

\item[(v)]\textbf{[Complement of $\mathcal{A}$ w.r.t. $\mathcal{A} + \mathcal{B}$]}

    \begin{hcalculation}[=]{\tikzfig{subspaces/a-intersecstep1}}
\hstep{\tikzfig{subspaces/a-intersecstep2}}{\ref{fig:Flip}: \tikzfig{theorems/whitefront}}
\hstep{\tikzfig{subspaces/sumH}}{Fig.\ref{fig:Commutative_Comonoid}:\tikzfig{theorems/unit}}
\end{hcalculation}

\noindent A similar calculation demonstrates \[\tikzfig{subspaces/a-intersec2step1} = \tikzfig{subspaces/empty},\] proving that $\tikzfig{subspaces/b-a}$ is the complement of $\mathcal{A}$ with respect to $\mathcal{A} + \mathcal{B}$.

\item[(vi)]\textbf{[Complement of $\mathcal{B} $ w.r.t. $\mathcal{A} + \mathcal{B}$]}  Analogous to item~(v).

\item[(vii)]\textbf{[Complement of $\mathcal{A}\cap \mathcal{B}$ w.r.t. $\mathcal{A} + \mathcal{B}$]} Analogous to item~(v).

\item [(vii)] \textbf{[$\left \{ 0 \right \}$]}
\begin{hcalculation}[=]{\tikzfig{subspaces/empty}}
\hstep{\Zero}{Fig.\ref{fig:typerelations}-DET: \tikzfig{theorems/whiteball}}
\end{hcalculation}
\end{itemize}
\end{proof}

From Theorem \ref{thm:subspaces}, it is possible to obtain all relevant subspaces of $\im(A)$ and $\im(B)$ through $H$.

\appendix

\section{Axioms and Theorems in Graphical Notation}
\label{sec:gla_thms}

This appendix presents the linear relations theorems in the graphical notation used throughout the paper. Only the theorems necessary for the presented proofs are recorded here. The complete presentation can be found in~\cite{PAIXAO2022}. In order to maintain the focus of this paper, most of the proofs in this section will be omitted, as they can be found in the same reference.

\paragraph*{Symmetric Strict Monoidal Categories}
\label{sec:SSM}
Raw terms are quotiented with respect to the laws of symmetric strict monoidal (SSM) categories, summarized in Fig.~\ref{fig:lawsSMC}. We omit the (well-known) details~\cite{selinger2010survey} here and mention only that this amounts to eschewing the need for ``dotted line boxes''
and ensuring that diagrams with the same topological connectivity are equated.

\begin{figure}[h]
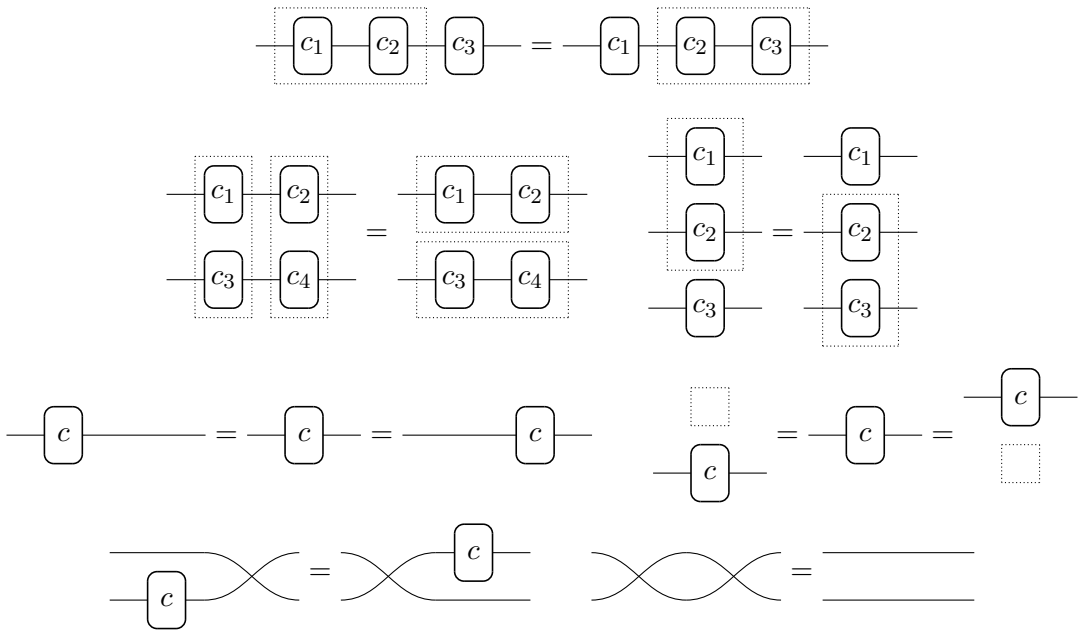

\begin{center}
\[
 \tikzfig{SSM/sequential-associativity} = \tikzfig{SSM/sequential-associativity-1}
\]
\[
\tikzfig{SSM/interchange-law} = \tikzfig{SSM/interchange-law-1}
 \qquad
 \tikzfig{SSM/parallel-associativity} = \tikzfig{SSM/parallel-associativity-1}
\]
\[
\tikzfig{SSM/unit-right2} = \tikzfig{SSM/c}
%\smallubox{c}
= \tikzfig{SSM/unit-left2}
\quad\quad
  \tikzfig{SSM/parallel-unit-above} = \tikzfig{SSM/c} =  \tikzfig{SSM/parallel-unit-below}
\]
\[
\tikzfig{SSM/sym-natural} = \tikzfig{SSM/sym-natural-1}
\qquad
\tikzfig{SSM/sym-iso} = \tikzfig{SSM/id2}
\]
%\end{eqnarray*}
\caption{Laws of Symmetric Strict Monoidal (SSM) Categories.}
\label{fig:lawsSMC}
\end{center}
\end{figure}

In graphical notation, the diagrams are closed under two symmetries: \emph{Mirror-Image} and \emph{Color-Swap}.
The theorems and their corresponding symmetries are summarized in Figures~\ref{fig:Commutative_Comonoid} to~\ref{fig:Minimum_and_maximum}.
Finally, Figure~\ref{fig:typerelations} outlines statements determining whether a relation is classified as total (TOT), deterministic (DET), surjective (SUR), or injective (INJ).

\begin{lemma}
  \label{lem:pair_of_matrices_1}
  Let $\Map{A}$ and $\Map{B}$ be two linear maps, then
  \[\tikzfig{rcd_for_pairs_of_matrices/Lemma1/lemma}.\]
\end{lemma}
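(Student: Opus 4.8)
The plan is to read the displayed statement as the claim that, for linear maps $A$ and $B$, the inclusion $A \subseteq B$ already forces the equality $A = B$ (equivalently, the inclusion order restricted to maps is discrete). One direction, $A = B \implies A \subseteq B$, is immediate from reflexivity of $\subseteq$, so the whole content lies in the converse. Throughout I would use only two facts, both recorded earlier: by Proposition~\ref{def:fundamental_properties_and_invertibility} every linear map is simultaneously \emph{total} and \emph{deterministic}, and by the remarks in Section~\ref{sec:graphical_notation} relational composition is monotone with respect to $\subseteq$, while passing to opposites is handled by the mirror symmetry of Figure~\ref{fig:Symmetry}.

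Assuming $A \subseteq B$, I would establish the reverse inclusion $B \subseteq A$ via the chain $B = \mathrm{id} \,;\, B \subseteq (A \,;\, A^{op}) \,;\, B = A \,;\, (A^{op} \,;\, B)$, where the middle step uses that $A$ is total. In the form recorded in Figure~\ref{fig:typerelations} this totality reads $\mathrm{id} \subseteq A \,;\, A^{op}$, and composing on the right with $B$ (monotonicity) gives the inclusion. From the hypothesis $A \subseteq B$ I would then pass to $A^{op} \subseteq B^{op}$ using the mirror symmetry, so that $A^{op} \,;\, B \subseteq B^{op} \,;\, B \subseteq \mathrm{id}$, the final step being the determinism of $B$ (Figure~\ref{fig:typerelations} in the form $B^{op} \,;\, B \subseteq \mathrm{id}$). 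Substituting back yields $B \subseteq A \,;\, (A^{op} \,;\, B) \subseteq A \,;\, \mathrm{id} = A$, and combining with the hypothesis $A \subseteq B$ gives $A = B$.

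In diagrammatic terms this is a short calculation: totality of $A$ lets me wrap an $A$-cup/cap around the identity wire feeding into $B$; the hypothesis $A \subseteq B$ rewrites one leg of that gadget into a $B$, producing $B^{op} \,;\, B$; and determinism of $B$ collapses $B^{op} \,;\, B$ back to the identity wire. I expect the only real subtlety to be bookkeeping rather than mathematics: keeping the left-to-right composition order and the placement of the opposites $(\cdot)^{op}$ consistent with the paper's conventions, and invoking totality of exactly one map and determinism of the other (each available precisely because both $A$ and $B$ are maps), so as not to accidentally require injectivity or surjectivity. Since totality and determinism are automatic for maps, there is no genuine obstacle; the argument is symmetric under mirroring and color-swapping, which is why a single such chain suffices.
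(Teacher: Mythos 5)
Your reading of the statement is correct, and your argument is essentially identical to the paper's: its three-step chain is exactly $B \subseteq A \,;\, A^{op} \,;\, B \subseteq A \,;\, B^{op} \,;\, B \subseteq A$, justified in turn by totality of $A$ (Figure~\ref{fig:typerelations}-TOT), the hypothesis $A \subseteq B$ applied in mirrored form, and determinism of $B$ (Figure~\ref{fig:typerelations}-DET). There are no differences worth noting.
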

\begin{proof}
  \begin{hcalculation}[\subseteq]{\Map{B}}
    \hstep{\tikzfig{rcd_for_pairs_of_matrices/Lemma1/proof1}}{Fig.\ref{fig:typerelations}-TOT: \tikzfig{theorems/totalid}}
    \hstep{\tikzfig{rcd_for_pairs_of_matrices/Lemma1/proof2}}{Hyp: $\Map{A}\subseteq \Map{B}$}
    \hstep{\Map{A}}{Fig.\ref{fig:typerelations}-DET: \tikzfig{theorems/detid}}
  \end{hcalculation}
\end{proof}

\begin{lemma}
\label{lem:break_a}
For all maps \Map{A}, there exists a map \Map{A_2} such that
\[ \tikzfig{rcd_for_cospan/lemma_break/lemma_hyp1} = \tikzfig{rcd_for_cospan/lemma_break/lemma_hyp2}.\]
\end{lemma}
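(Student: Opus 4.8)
The plan is to treat this as an existence statement and discharge it by an explicit construction: produce a concrete candidate for $\Map{A_2}$ and then verify the claimed diagram equality by purely equational reasoning. Once a witness is exhibited, nothing further is required, so the only real content is (a) guessing the right $\Map{A_2}$ and (b) checking that the two diagrams coincide. I would read off $\Map{A_2}$ directly from the right-hand diagram, defining it as the sub-composite of $\Map{A}$ that appears on the newly introduced wire (concretely, $\Map{A}$ post-composed with a discard on the wires that are not part of the ``broken-off'' block, so that $\Map{A_2}$ is literally the relevant block of $\Map{A}$). Declaring $\Map{A_2}$ by such a formula makes it manifestly a map, since it is a composite of maps.

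With $\Map{A_2}$ fixed, the verification proceeds by diagram manipulation. The engine of the argument is that a linear map, being both total and deterministic, is a comonoid homomorphism: it commutes with copying and is correctly annihilated by discarding (Fig.~\ref{fig:Commutative_Comonoid} and Fig.~\ref{fig:Bialgebra}, together with the classifying identities of Fig.~\ref{fig:typerelations}-TOT/DET). Starting from the left-hand diagram, I would push the copy node on the input past $\Map{A}$ using copy-naturality, which duplicates $\Map{A}$ into two parallel branches; discarding the appropriate wires on one branch via the counit law (Fig.~\ref{fig:Commutative_Comonoid}) collapses that branch exactly to the block I have named $\Map{A_2}$, while the other branch remains $\Map{A}$. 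A final reorganization of wires (and, if the block structure is set up with a sum rather than a copy, one application of the Frobenius or flip identities of Fig.~\ref{fig:Frobenius_Algebra}/Fig.~\ref{fig:Flip}) should land on the right-hand diagram.

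The main obstacle I anticipate is ensuring that every move which slides a copy or discard node past $\Map{A}$ is legitimate: these comonoid-homomorphism laws hold for genuine maps but fail for arbitrary linear relations, so each such step must be justified by the hypothesis that $\Map{A}$ is a map rather than a general $\Rel{R}$. I would therefore flag explicitly at each diagram rewrite that the naturality of copy (resp. the counit law for discard) is being invoked, and confirm that the object being moved is indeed the white copy/discard structure against a true map. Provided the candidate $\Map{A_2}$ is chosen to match the block that the decomposition in Theorem~\ref{thm:rcd_for_cospan} needs at step~9, the equational check should be short, and the existence claim follows immediately from the construction.
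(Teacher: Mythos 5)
Your proposal is correct and follows essentially the same route as the paper: the witness $\Map{A_2}$ is the block of $\Map{A}$ obtained by post-composing with discards, and the equality is checked via the comonoid-homomorphism behaviour of maps (copy-naturality plus the counit law), which the paper packages as an application of Split (Fig.~\ref{fig:Split}) followed by the DET equation (Fig.~\ref{fig:typerelations}) and the unit law (Fig.~\ref{fig:Commutative_Comonoid}). There is no substantive difference in approach.
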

\begin{proof}
\begin{hcalculation}[=]{\tikzfig{rcd_for_cospan/lemma_break/lemma_hyp1}}
    \hstep{\tikzfig{rcd_for_cospan/lemma_break/step1}}{Fig.\ref{fig:Split}: \tikzfig{theorems/split}}
    \hstep{\tikzfig{rcd_for_cospan/lemma_break/step2}}{Fig.\ref{fig:typerelations}-DET: \tikzfig{theorems/deteq}}
    \hstep{\tikzfig{rcd_for_cospan/lemma_break/lemma_hyp2}}{Fig.\ref{fig:Commutative_Comonoid}: \tikzfig{theorems/unit}}
  \end{hcalculation}
\end{proof}

\begin{lemma}
\label{thm:subspace_image}
    For every subspace $\tikzfig{rcd_for_cospan/cospan/subspace}$, there exist linear maps $\Map{X_1}$ and $\Map{X_2}$ such that
\begin{enumerate}
    \item[(i)] \tikzfig{rcd_for_cospan/cospan/subspace-image},
    \item[(ii)] \tikzfig{rcd_for_cospan/cospan/subspace-kernel}.
  \end{enumerate}
\end{lemma}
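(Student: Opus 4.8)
The statement packages two classical facts: every subspace is the kernel of some linear map and the image of some linear map. The plan is to read each presentation off from a decomposition of $V$ in which one leg is forced to be trivial because $V$, as a $0$-by-$n$ relation, has an empty left boundary. The kernel presentation will come from the cospan form of Proposition~\ref{prop:cospan}, and the image presentation from its Color-Swap, the span form.

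For item (ii), regard $V \subseteq \K^n$ as a $0$-by-$n$ relation and place it in cospan form via Proposition~\ref{prop:cospan}: $V = \CoSpan{C}{X_2}$ for a map $\Map{X_2}$ with $n$ inputs and a map $C$ whose domain is $\K^0$. The leg $C$ is the unique map out of the zero space, hence a trivial diagrammatic fragment; feeding it into the cospan collapses the picture to the kernel of $\Map{X_2}$, so that $V = \ker(X_2)$.

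For item (i), I would use the span form of $V$. This is the Color-Swap of Proposition~\ref{prop:cospan}: every equality of relations is paired with its color-swapped version (Figure~\ref{fig:Symmetry}), and color-swapping a cospan yields a span, so every relation—$V$ included—can be written $V = \Span{D}{X_1}$. For the $0$-by-$n$ relation $V$ the leg $D$ now lands in $\K^0$ and becomes a discard, collapsing the span to the image of the remaining map, whence $V = \im(X_1)$. It is worth stressing that (i) and (ii) are the two dual presentations of the \emph{same} $V$, and that the Color-Swap does \emph{not} relate them directly: color-swapping the identity $V = \ker(X_2)$ produces an image presentation of a \emph{different} subspace (the color-swap of $V$), which is why each presentation must be obtained from its own (co)span form.

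The main obstacle is foundational rather than computational. If Proposition~\ref{prop:cospan} is itself established through this lemma—its proof is deferred to Lemma~\ref{lem:relation_in_cospan_form_complete}—then the argument above would be circular. To keep the lemma self-contained I would instead build the two maps by hand: a spanning family of $V$, taken as the columns of $X_1$, gives $\im(X_1) = V$, and a basis for the annihilator of $V$, taken as the rows of $X_2$, gives $\ker(X_2) = V$. Diagrammatically, extracting such families from $V$ is a single application of the Canonical Decomposition (Theorem~\ref{thm:gaussian_elimination}). With either route the only genuine care is the bookkeeping that collapses the trivial $\K^0$-leg and matches the resulting diagram to the image and kernel shapes drawn in the statement.
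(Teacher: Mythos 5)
Your final, self-contained construction is correct and is essentially the paper's proof: the paper simply cites the classical fact that every subspace is the image of a linear map (Axler, Proposition~2.7) for item~(i) and obtains item~(ii) ``by symmetry,'' which matches your spanning-set/annihilator argument; you were also right to flag and discard the (co)span route, since Lemma~\ref{lem:relation_in_cospan_form_complete} itself invokes this lemma and the argument would be circular. One small caveat: your aside that the two families can be extracted ``by a single application of the Canonical Decomposition'' does not typecheck, because Theorem~\ref{thm:gaussian_elimination} applies to linear maps and using it would already presuppose a map presenting $V$; stick with the explicit choice of a spanning set for $X_1$ and a basis of the annihilator for $X_2$. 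Note also that, since the lemma is universally quantified over subspaces, the color-swap symmetry does legitimately derive~(ii) from~(i) at the level of statements, even though for a \emph{fixed} $V$ it relates presentations of different subspaces, as you observe.
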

\begin{proof}
  \begin{itemize}
  \item[$(i)$] \citep[Proposition~2.7]{axler2024linear}.
  \item[$(ii)$] Analogous by symmetry.
  \end{itemize}
\end{proof}

\begin{lemma}[Cospan and Span]\label{lem:relation_in_cospan_form_complete} For all relations $\Rel{R}$,
  \begin{enumerate}
    \item[(i)] $\exists \Map{A}, \exists \Map{B}$, such that $\Rel{R} = \CoSpan{A}{B} $,
    \item[(ii)] $\exists \Map{C}, \exists \Map{D}$, such that $\Rel{R} = \Span{C}{D} $.
  \end{enumerate}
\end{lemma}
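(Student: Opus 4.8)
The plan is to reduce both parts to Lemma~\ref{thm:subspace_image}, which presents an arbitrary subspace simultaneously as the image of one linear map and as the kernel of another. That lemma is stated for subspaces, i.e.\ for relations of type $0 \to k$, whereas here $\Rel{R}$ has general type $m \to n$; the bridge between the two is the self-dual (compact closed) structure of linear relations available in graphical linear algebra. Concretely, by bending the $m$ left wires of $\Rel{R}$ around to the right one obtains a subspace $\widehat{R} \subseteq \K^{m} \times \K^{n}$ of type $0 \to m+n$, and this bending is reversible, so recovering $\Rel{R}$ from $\widehat{R}$ only costs bending the same wires back. Set-theoretically $\widehat{R}$ is nothing but $\Rel{R}$ regarded as a subspace of $\K^{m+n}$.

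For item~(ii) I would apply Lemma~\ref{thm:subspace_image}(i) to $\widehat{R}$, producing a map $\Map{Y}$ of type $\ell \to m+n$ with image $\widehat{R}$. Splitting its $m+n$ output wires into a top block of size $m$ and a bottom block of size $n$ exhibits $\Map{Y}$ as a pair $\Map{C}$ (type $\ell \to m$) and $\Map{D}$ (type $\ell \to n$), so that $\widehat{R} = \{(Cz, Dz) \mid z \in \K^{\ell}\}$; bending the top $m$ wires back to the left turns this image into exactly $\Span{C}{D}$, giving $\Rel{R} = \Span{C}{D}$. For item~(i) the argument is identical but invokes Lemma~\ref{thm:subspace_image}(ii): $\widehat{R}$ is the kernel of a map $\Map{X}$ of type $m+n \to k$, which splits into $\Map{A}$ (type $m \to k$) and $\Map{B}$ (type $n \to k$) with $X(x,y) = Ax - By$, and unbending identifies the kernel $\{(x,y) \mid Ax = By\}$ with the cospan $\CoSpan{A}{B}$.

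Moreover, once one item is established the other is essentially free: span and cospan are exchanged under the color-swap symmetry of Figure~\ref{fig:Symmetry} (image and kernel being color-swap duals), so one could instead derive~(ii) from~(i) by swapping the black and white structures. The only delicate point is the bookkeeping of the bending step: one must verify that bending commutes with composition so that the unbent image (resp.\ kernel) is literally the $\Span{C}{D}$ (resp.\ $\CoSpan{A}{B}$) diagram, and that the block decomposition of $\Map{Y}$ (resp.\ $\Map{X}$) into $\Map{C},\Map{D}$ (resp.\ $\Map{A},\Map{B}$) respects the intended typing and sign. Beyond that, everything is a direct application of Lemma~\ref{thm:subspace_image}.
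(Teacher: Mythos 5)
Your proposal matches the paper's proof in all essentials: the paper also bends the left wires with the Snake lemma to view $\Rel{R}$ as a subspace of $\K^{m+n}$, invokes Lemma~\ref{thm:subspace_image} (the kernel presentation for item~(i)), splits the resulting map into the blocks $\Map{A}$, $\Map{B}$, and unbends with Snake again, handling the sign/side bookkeeping via the Flip lemma. The remarks you flag as delicate (bending commuting with composition, the block split, the sign) are precisely the intermediate Split and Flip steps the paper carries out, so the argument is correct and essentially identical.
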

\begin{proof}
We prove item~(i). The other is analogous.
  \begin{hcalculation}[=]{\Rel{R}}
    \hstep{\tikzfig{rcd_for_cospan/cospan/lemmacospan/step1}}{Fig.\ref{fig:Snake}: \tikzfig{theorems/snake}}
    \hstep{\tikzfig{rcd_for_cospan/cospan/lemmacospan/step2}}{Lem.\ref{thm:subspace_image}:\tikzfig{theorems/subspace-kernel}}
    \hstep{\tikzfig{rcd_for_cospan/cospan/lemmacospan/step3}}{Fig.\ref{fig:Split}: \tikzfig{theorems/split}}
    \hstep{\tikzfig{rcd_for_cospan/cospan/lemmacospan/step3-1}}{Fig.\ref{fig:typerelations}-DET: \tikzfig{theorems/whiteball}}
    \hstep{\tikzfig{rcd_for_cospan/cospan/lemmacospan/step4}}{Fig.\ref{fig:Flip}: \tikzfig{theorems/flip}}
    \hstep{\tikzfig{rcd_for_cospan/cospan/lemmacospan/step5}}{Fig.\ref{fig:Snake}: \tikzfig{theorems/snake}}
  \end{hcalculation}
\end{proof}

\begin{lemma}\label{lemma:abstract}
  For all linear maps $\Map{A}, \Map{B}, \Map{C}, \Map{D}$,
  \[\CoSpan{A}{B}=\CoSpan{C}{D} \iff \tikzfig{rcd_for_pairs_of_matrices/TheoremAbstract/Lemma/lemma}\]
\end{lemma}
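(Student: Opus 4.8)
The plan is to treat both sides of the biconditional as \emph{equations between diagrams} and to link them by a chain of \emph{reversible} graphical rewrites. The key observation is that a cospan is just a relational composite: unfolding Definition~\ref{def:relation_operations}, the diagram $\CoSpan{A}{B}$ is the composite $\Comp{A/Map, B/CoMap}$, i.e. the solution set $\{(x,y)\mid Ax = By\}$ drawn as $\Map{A}$ feeding its output into the flipped map $\CoMap{B}$ through a shared apex wire. The right-hand side of the lemma presents the very same data reorganized into a single input-free ``subspace'' diagram, obtained by bending that apex. Since every law I intend to use (the snake/yanking equations of Figure~\ref{fig:Snake}, the flip rules of Figure~\ref{fig:Flip}, the Frobenius equations of Figure~\ref{fig:Frobenius_Algebra}, and the comonoid unit laws of Figure~\ref{fig:Commutative_Comonoid}) is an \emph{equality} of diagrams, applying any of them simultaneously to both sides of an equation yields an equivalent equation. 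This reversibility is exactly what upgrades a one-directional computation into the desired ``if and only if''.

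Concretely, I would write the proof as a single calculation whose relating symbol is $\iff$. Starting from $\CoSpan{A}{B} = \CoSpan{C}{D}$, I first apply the snake equation of Figure~\ref{fig:Snake} to bend the input wire of each cospan downward, converting the equality of relations $\K^m \to \K^n$ into an equality of the corresponding subspaces of $\K^m \times \K^n$; in this bent form the two legs $\Map{A}$ and $\Map{B}$ (respectively $\Map{C}$ and $\Map{D}$) sit side by side, each fed by a shared black structure. I would then normalize the orientation of the flipped maps with Figure~\ref{fig:Flip} and absorb any auxiliary spiders produced by the bend using the Frobenius and unit laws of Figures~\ref{fig:Frobenius_Algebra} and~\ref{fig:Commutative_Comonoid}, landing exactly on the joint diagram displayed in the statement. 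Because the identical rewrite is applied verbatim to both halves, the running statement remains a valid equation at every line, so the final equation is equivalent to the first.

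The step I expect to be the main obstacle is the wire-bending itself. I must check that the snake applied to the apex produces \emph{precisely} the black (co)monoid structure that appears on the right-hand side, and that the same spider law is genuinely available on both sides; the subtle danger is a mismatch between a white ``copy'' node and a black ``merge'' node after the bend, which would force an extra bialgebra rewrite from Figure~\ref{fig:Bialgebra} to reconcile. Should any intermediate step fail to be a strict equality, the clean $\iff$-chain would have to be split into separate $(\Rightarrow)$ and $(\Leftarrow)$ computations; but I expect every step to be a reversible isomorphism of diagrams, so no such split is needed. Notably, this argument uses no dimensional or existential reasoning whatsoever, which is exactly why the equivalence holds over an arbitrary field $\K$.
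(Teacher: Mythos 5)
Your plan matches the paper's proof: Lemma~\ref{lemma:abstract} is established there by exactly such a short reversible chain of diagram rewrites ($\iff$ at every line) connecting the cospan equality to its bent, subspace form. The only difference is bookkeeping --- the paper cites the determinism (``whiteball'') rule of Figure~\ref{fig:typerelations} and the Flip rule where you reach for the Frobenius and unit laws, and it packages your snake-based ``bend and un-bend'' step as the Compare theorem of Figure~\ref{fig:Compare} instead of re-deriving it inline.
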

\begin{proof}
  \begin{hcalculation}[\iff]{\tikzfig{rcd_for_pairs_of_matrices/TheoremAbstract/Lemma/lemma}}
    \hstep{\tikzfig{rcd_for_pairs_of_matrices/TheoremAbstract/Lemma/step7}}{Fig.\ref{fig:typerelations}-DET: \tikzfig{theorems/whiteball}}
    \hstep{\tikzfig{rcd_for_pairs_of_matrices/TheoremAbstract/Lemma/step8}}{Fig.\ref{fig:Flip}: \tikzfig{theorems/flip}}
    \hstep{\CoSpan{A}{B}=\CoSpan{C}{D}}{Fig.\ref{fig:Compare}: \tikzfig{theorems/comparison}}
  \end{hcalculation}
\end{proof}

\begin{figure}[htb]
    \centering
    \resizebox{\textwidth}{!}{\tikzfig{Appendix/Commutative_Comonoid}}
    \caption{Commutative Comonoid}
    \label{fig:Commutative_Comonoid}
\end{figure}

\begin{figure}[htb]
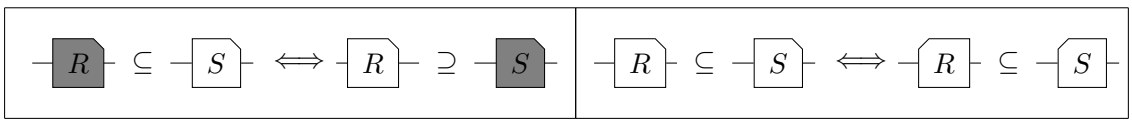

    \centering
    \resizebox{\textwidth}{!}{\tikzfig{Appendix/Symmetry}}
    \caption{The Symmetry Theorems}
    \label{fig:Symmetry}
\end{figure}

\begin{figure}[htb]
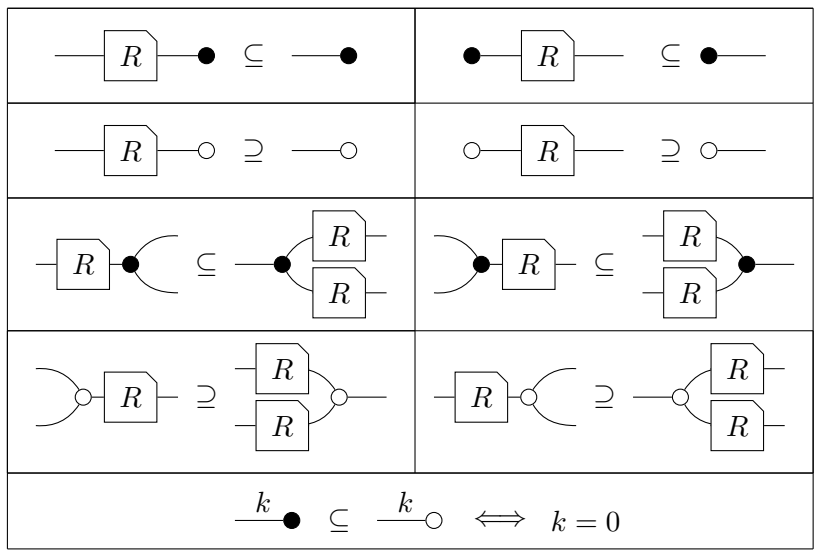

    \centering
    \tikzfig{Appendix/Inequality}
    \caption{The Inequality Theorems}
    \label{fig:Inequality}
\end{figure}

\begin{figure}[htb]
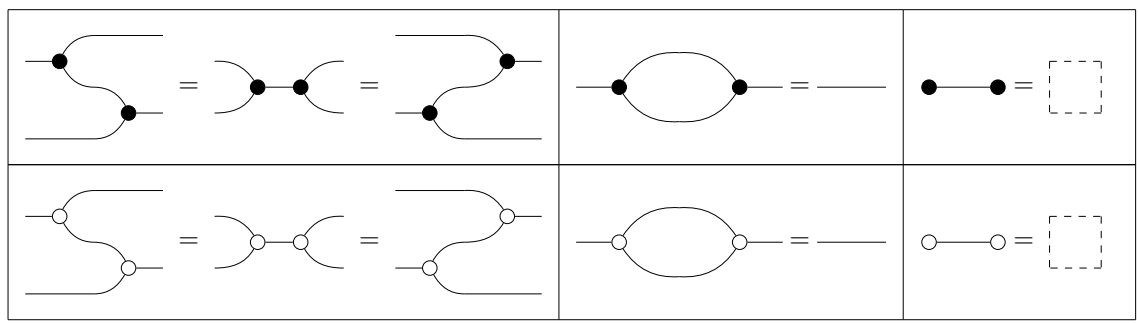

    \centering
    \resizebox{\textwidth}{!}{\tikzfig{Appendix/Frobenius_Algebra}}
    \caption{Frobenius Algebra}
    \label{fig:Frobenius_Algebra}
\end{figure}

\begin{figure}[htb]
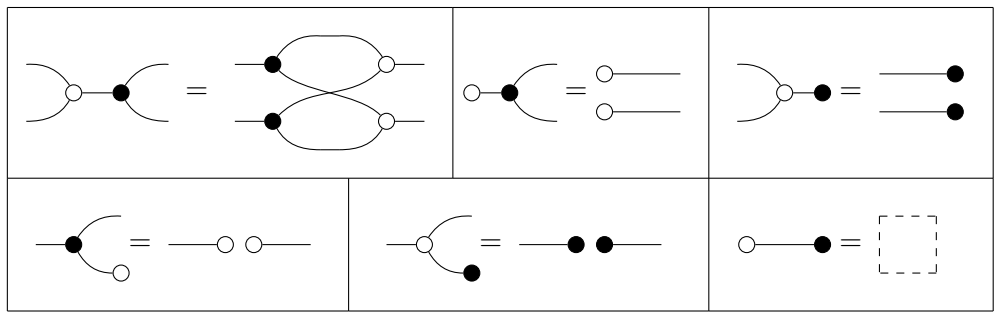

    \centering
    \tikzfig{Appendix/Bialgebra}
    \caption{Bialgebra}
    \label{fig:Bialgebra}
\end{figure}

\begin{figure}[htb]
    \centering
    \resizebox{\textwidth}{!}{\tikzfig{Appendix/Flip}}
    \caption{Flip}
    \label{fig:Flip}
\end{figure}

\begin{figure}[htb]
    \centering
    \tikzfig{Appendix/Snake}
    \caption{Snake}
    \label{fig:Snake}
\end{figure}

\begin{figure}[htb]
    \centering
    \begin{minipage}[t]{.6\textwidth}
      \centering
      \tikzfig{Appendix/Connect_sum}
      \captionof{figure}{Connect sum}
      \label{fig:Connect_sum}
    \end{minipage}%
    \hfill\begin{minipage}[t]{.4\textwidth}
      \centering
      \tikzfig{Appendix/Split}
      \captionof{figure}{Split}
      \label{fig:Split}
    \end{minipage}\\
    \begin{minipage}[t]{.6\textwidth}
      \centering
      \tikzfig{Appendix/Compare}
      \captionof{figure}{Compare}
      \label{fig:Compare}
    \end{minipage}%
    \hfill\begin{minipage}[t]{.4\textwidth}
      \centering
      \tikzfig{Appendix/Minimum_and_maximum}
      \captionof{figure}{Minimum and maximum relations}
      \label{fig:Minimum_and_maximum}
    \end{minipage}\\
\end{figure}

\begin{figure}[htb]
    \centering
    \resizebox{\textwidth}{!}{\tikzfig{Appendix/typesofrelations}}
    \caption{Each column of the table is a series of equivalent statements.
      Furthermore, the opposite inequalities for the second and third row hold for any relation (See Figure~\ref{fig:Inequality}).
      We are, thus, allowed to treat these characterizations as equalities.
    }
    \label{fig:typerelations}
\end{figure}

\FloatBarrier

\bibliographystyle{fundam}
\bibliography{bibliography}

\begin{thebibliography}{10}
\providecommand{\url}[1]{\texttt{#1}}
\providecommand{\urlprefix}{URL }
\expandafter\ifx\csname urlstyle\endcsname\relax
  \providecommand{\doi}[1]{doi:\discretionary{}{}{}#1}\else
  \providecommand{\doi}{doi:\discretionary{}{}{}\begingroup
  \urlstyle{rm}\Url}\fi
\providecommand{\eprint}[2][]{\url{#2}}

\bibitem{weisstein2014invertible}
Weisstein EW.
\newblock Invertible matrix theorem.
\newblock \emph{https://mathworld.wolfram.com/}, 2014.

\bibitem{axler2024linear}
Axler SJ.
\newblock Linear Algebra Done Right.
\newblock Springer International Publishing, 4 edition, 2024.
\newblock ISBN 9783031410260.
\newblock \doi{10.1007/978-3-031-41026-0}.
\newblock \urlprefix\url{http://dx.doi.org/10.1007/978-3-031-41026-0}.

\bibitem{hinze2023introducing}
Hinze R, Marsden D.
\newblock Introducing String Diagrams: The Art of Category Theory.
\newblock Cambridge University Press, 2023.

\bibitem{ZanasiThesis}
Zanasi F.
\newblock Interacting {H}opf Algebras: the theory of linear systems.
\newblock Ph.D. thesis, {E}cole Normale Sup\'{e}rieure de Lyon, 2015.

\bibitem{bonchi2014categorical}
Bonchi F, Soboci{\'n}ski P, Zanasi F.
\newblock A categorical semantics of signal flow graphs.
\newblock In: International Conference on Concurrency Theory. Springer, 2014
  pp. 435--450.

\bibitem{Bonchi2015}
Bonchi F, Soboci\'{n}ski P, Zanasi F.
\newblock Full Abstraction for Signal Flow Graphs.
\newblock In: POPL 2015. ACM, 2015 pp. 515--526.

\bibitem{bonchi2017refinement}
Bonchi F, Holland J, Pavlovic D, Soboci\'nski P.
\newblock {Refinement for Signal Flow Graphs}.
\newblock In: Meyer R, Nestmann U (eds.), 28th International Conference on
  Concurrency Theory (CONCUR 2017), volume~85 of \emph{Leibniz International
  Proceedings in Informatics (LIPIcs)}. Schloss Dagstuhl -- Leibniz-Zentrum
  f{\"u}r Informatik, Dagstuhl, Germany.
\newblock ISBN 978-3-95977-048-4, 2017 pp. 24:1--24:16.
\newblock \doi{10.4230/LIPIcs.CONCUR.2017.24}.
\newblock
  \urlprefix\url{https://drops.dagstuhl.de/entities/document/10.4230/LIPIcs.CONCUR.2017.24}.

\bibitem{Bonchi2019a}
Bonchi F, Piedeleu R, Sobociński P, Zanasi F.
\newblock Graphical Affine Algebra.
\newblock In: 2019 34th Annual ACM/IEEE Symposium on Logic in Computer Science
  (LICS). 2019 pp. 1--12.
\newblock \doi{10.1109/LICS.2019.8785877}.

\bibitem{PAIXAO2022}
Paixão J, Rufino L, Soboci\'{n}ski P.
\newblock High-level axioms for graphical linear algebra.
\newblock \emph{Science of Computer Programming}, 2022.
\newblock \textbf{218}:102791.
\newblock \doi{https://doi.org/10.1016/j.scico.2022.102791}.
\newblock
  \urlprefix\url{https://www.sciencedirect.com/science/article/pii/S0167642322000247}.

\bibitem{blogpawel}
Soboci\'{n}ski P.
\newblock Graphical Linear Algebra, “Maths with Diagrams”, Episode 17.
\newblock
  \url{https://graphicallinearalgebra.net/2015/06/16/maths-with-diagrams/},
  2015.
\newblock Accessed: 03-08-2023.

\bibitem{Dijkstra1989PredicateCA}
Dijkstra EW, Scholten CS.
\newblock Predicate Calculus and Program Semantics.
\newblock Monographs in Theoretical Computer Science. Springer, 1989.
\newblock ISBN 978-0387507495.

\bibitem{gsvd1976}
Van~Loan CF.
\newblock Generalizing the Singular Value Decomposition.
\newblock \emph{SIAM Journal on Numerical Analysis}, 1976.
\newblock \textbf{13}(1):76--83.
\newblock \doi{10.1137/0713009}.
\newblock \eprint{https://doi.org/10.1137/0713009},
  \urlprefix\url{https://doi.org/10.1137/0713009}.

\bibitem{gsvd1981}
Paige CC, Saunders MA.
\newblock Towards a Generalized Singular Value Decomposition.
\newblock \emph{SIAM Journal on Numerical Analysis}, 1981.
\newblock \textbf{18}(3):398--405.
\newblock \doi{10.1137/0718026}.
\newblock \eprint{https://doi.org/10.1137/0718026},
  \urlprefix\url{https://doi.org/10.1137/0718026}.

\bibitem{booth2024completeequationaltheoriesclassical}
Booth RI, Carette T, Comfort C.
\newblock Complete equational theories for classical and quantum Gaussian
  relations, 2024.
\newblock \eprint{2403.10479},
  \urlprefix\url{https://arxiv.org/abs/2403.10479}.

\bibitem{freyd1990categories}
Freyd P, Scedrov A.
\newblock Categories, Allegories, volume~39.
\newblock North-Holland, Amsterdam, 1990.
\newblock ISBN 9780444548856, 9780444703675, 9780444703682, 9780080887012.
\newblock Paperback ISBN: 9780444548856, Paperback ISBN: 9780444703675,
  Hardback ISBN: 9780444703682, eBook ISBN: 9780080887012.

\bibitem{Fischer2012}
Fischer G.
\newblock Lernbuch Lineare Algebra Und Analytische Geometrie.
\newblock Springer Vieweg, 2 edition, 2012.

\bibitem{selinger2010survey}
Selinger P.
\newblock A survey of graphical languages for monoidal categories.
\newblock In: New structures for physics, pp. 289--355. Springer, 2010.

\end{thebibliography}

\end{document}